\newtheorem{theorem}{Theorem}
\newtheorem{remark}{Remark}
\newtheorem{lemma}{Lemma}
\begin{document}
	
%
\title{Percentile Optimization in Wireless Networks—\\Part II: Beamforming for Cell-Edge Throughput Maximization}

%
%
%

\author{Ahmad~Ali~Khan,~\IEEEmembership{Student Member,~IEEE,}
        Raviraj~S.~Adve,~\IEEEmembership{Fellow,~IEEE.}\vspace{-2.6em}
\thanks{Raviraj Adve is with the Edward S. Rogers Sr. Department
	of Electrical and Computer Engineering, University of Toronto, Toronto,
	Ontario, M5S 3G4, e-mail: rsadve@ece.utoronto.ca. Ahmad Ali Khan is with Ericsson R\&D, Ottawa, Ontario, K2K 2V6, email: ahmad.a.khan@ericsson.com. The authors would like to acknowledge the support of Ericsson Canada.}}

\markboth{IEEE TRANSACTIONS ON SIGNAL PROCESSING (DRAFT)}%
{Khan and Adve \MakeLowercase{\textit{et al.}}: Percentile Optimization in Wireless Networks-- Part 2.}
%



\maketitle
{\addvspace{-9.5\baselineskip}}

\begin{abstract}
Part I of this two-part paper focused on the formulation of percentile problems, complexity analysis, and development of power control algorithms via the quadratic fractional transform (QFT) and logarithmic fractional transform (LFT) for sum-least-$q^\mathrm{th}$-percentile (SLqP) rate maximization problems. In this second part, we first tackle the significantly more challenging problems of optimizing SLqP rate via beamforming in a multiuser, multiple-input multiple-output (MU-MIMO) network to maximize cell-edge throughput. To this end, we first propose an adaptation of the QFT algorithm presented in Part I that enables optimization of the complex-valued multidimensional beamforming weights for the SLqP rate utility function. We also introduce a new class of problems which we term as sum-greatest-$q^\mathrm{th}$-percentile weighted mean squared error (SGqP-WMSE) minimization. We show that this class subsumes the well-known sum-weighted mean squared error (WMMSE) minimization and max-WMSE minimization problems. We demonstrate an equivalence between this class of problems and the SLqP rate maximization problems, and show that this correspondence can be exploited to obtain stationary-point solutions for the aforementioned beamforming problem. Next, we develop extensions for the QFT and LFT algorithms from Part I to optimize ergodic \textit{long-term average} or ergodic SLqP utility. Finally, we also consider related problems which can be solved using the proposed techniques, including hybrid utility functions targeting optimization at specific subsets of users within cellular networks.

\end{abstract}

\begin{IEEEkeywords}
Percentile optimization, cell-edge, beamforming, cyclic maximization, WMMSE.
\end{IEEEkeywords}

%
\IEEEpeerreviewmaketitle

\section{Introduction}
%
%
%
%
\subsection{Background and Overview}

\IEEEPARstart{I}{n} Part I of this paper, we formulated percentile problems and developed power control strategies for the family of sum-least-$q^{\mathrm{th}}$-percentile (SLqP) rate optimization problems. Specifically, we showed that optimizing throughput for lower percentiles allows us to quantitatively target improvements for cell-edge users in a physically interpretable yet mathematically rigorous fashion. Moreover, by controlling the choice of percentile, we can now control the fundamental tradeoff between favoring cell-centre and cell-edge users in an interference-limited cellular network. At the extremes, SLqP rate maximization problems subsume the well-known sum-rate and max-min-rate problems.

In this second part, we first focus on the more general setting of optimizing beamforming for percentile-rate objectives in MU-MIMO networks. Similar to the power control setting considered in Part I, these problems, with the exception of sum-rate and max-min-rate, have never been directly tackled in the prior literature. At the same time, this setting is of immense practical interest owing to its relevance to multiantenna wireless networks \cite{soret_interference_2018, 3gpp}. In particular, sixth-generation (6G) cellular networks are envisioned to have cell-edge rates that are significantly greater \cite{ziegler_6g_2020} than those possible with current fifth-generation (5G) networks; one of the key technologies enabling this advance is an increased number of transmit and receive antennas. As a result, numerous prior academic works have focused on trying to improve cell-edge rates with a particular emphasis on lower-percentile rates \cite{khan_optimizing_2020,shen_fractional_2018-1,hosseini_optimizing_2018,yenihayat_downlink_2020,garcia-morales_statistical_2018}. Indeed, we note that numerous works have attempted to indirectly optimize cell-edge throughput for cooperative MIMO \cite{do_improving_2018, yang_cell-edge-aware_2017}, massive MIMO \cite{zhang2021improving, ma_interference-alignment_2018} and cell-free or heterogeneous MIMO networks \cite{evangelista_fairness_2019,mankar_load-aware_2018}.

In the context of beamforming, there are two main approaches to improve lower-percentile throughput adopted by prior research works. The first is to utilize the framework of stochastic analysis. With this approach, a beamforming and user selection scheme is chosen beforehand. Using tools from probability theory and stochastic geometry, closed-form expressions for user rates are derived, and the parameters which improve throughput at a desired percentile are determined empirically. Numerous prior works have adopted this approach to improve $10^\mathrm{th}$-percentile rate \cite{hosseini_optimizing_2018,zhu_stochastic_2018} and $5^\mathrm{th}$-percentile rate \cite{yang_cell-edge-aware_2017}. A more general framework for stochastically optimizing cell-edge rates was also presented in \cite{nguyen_coverage_2018}. However, such a strategy is clearly sub-optimal since the beamforming and user selection choices are pre-defined. Furthermore, the process for determining the resource allocation parameters which give the performance at any given percentile is \textit{ad-hoc}.

The second strategy is to utilize the framework of optimization theory. Here, just as with the power control setting, no prior works have directly tackled functional optimization of throughput at any desired percentile via beamforming. Instead, a proxy objective function, usually proportional fair (PF) weighted sum-rate (WSR), is chosen, and optimized. In particular, prior works applying fractional programming have utilized this technique to boost $10^\mathrm{th}$-percentile rate \cite{khan_optimizing_2020,shen_fractional_2018-1}. Other works adopting beamforming techniques to indirectly optimize cell-edge rates include \cite{li_joint_2021,rahman_enhancing_2010}. We emphasize, however, that WSR maximization is, at best, a heuristic for improving lower-percentile rates.

It is also important to note that percentile beamforming optimization problems, even for the special max-min and sum-rate cases, are fundamentally more challenging than power control problems. For example, the max-min-rate beamforming optimization problem is non-convex and NP-hard \cite{razaviyayn_linear_2013}; in contrast, for the power control scenario, it is a quasiconvex problem \cite{zhi-quan_luo_dynamic_2008}, and thus efficiently solvable. Furthermore, the algorithms to deal with non-smooth beamforming problems, such as the max-min optimization problem, usually rely upon smoothing the problem through epigraph formulations \cite{scutari_parallel_2017_1, scutari_parallel_2017,shen_fractional_2018-1,razaviyayn_linear_2013}. As we have seen from Part I, however, this technique inevitably results in a combinatorial number of non-convex constraints, making it computationally infeasible. On the other hand, dealing directly with the non-smooth and non-convex percentile objective function is also undesirable.

A further challenge comes in the form of optimizing \textit{long-term average utility}. Instead of maximizing the utility for a single-time slot, usually a more desirable strategy is to focus on optimizing an exponentially weighted average of the long-term utility; this approach ensures that the fast-fading channel components do not adversely impact resource allocation. For conventional utility functions, such an optimization is typically enabled by decomposing the long-term problem into a sequence of tractable per-timeslot approximations. In the case of proportional fair utility, for example, the logarithm utility function is replaced with a first-order expansion \cite{hossain_adaptive_2011}. This leads to the well-known weighted sum-rate (WSR) maximization approach, in which the weights are successively updated after each time slot. Numerous works have been dedicated to solving the WSR maximization problem in various scenarios \cite{khan_optimizing_2020,shen_fractional_2018-1,shi_iteratively_2011}. However, as discussed in Part I, percentile utility functions are \textit{non-smooth} and are computed \textit{jointly} across all users, making the aforementioned decomposition techniques inapplicable and the optimization more challenging. 

Finally, we note that different groups of cell-edge users may have very different throughput needs and requirements in cellular networks. Accordingly, operators may wish to prioritize different objective functions for different sets of users within a cellular network, or combine multiple utility criteria into a single \textit{hybrid} utility function. For example, the work in \cite{scutari_parallel_2017} examines the multicast beamforming problem achieving max-min fairness across multiple cells; \cite{naghsh_maxmin_2019} examines the weighted max-min beamforming optimization problem. Similarly, the work \cite{naderializadeh_resource_2021} proposes a multiagent deep reinforcement learning approach to maximize a weighted sum of the $100^\mathrm{th}$-percentile throughput (i.e., sum-rate) and $5^\mathrm{th}$-percentile throughput to achieve a desired fairness level. 

Such resource allocation problems are highly challenging to solve owing to their non-convexity and, in the case of percentile utility functions, inherent combinatorial structure. Classic optimization-theoretic resource allocation techniques like fractional programming (FP) and WMMSE can be extended beyond weighted sum-rate; however, the utility function being optimized has to satisfy additional stringent conditions \cite{shen_fractional_2018-1} such as strict convexity and sum-decomposability. However, as noted in Part I, the SLqP utility function, for example, is neither strictly concave nor smooth and cannot be decomposed into a sum of per-user utilities. This renders the prior algorithms unsuitable for optimization of the class of hybrid utility functions.

\subsection{Contributions of Part II}
The contributions of Part II can be briefly summarized as follows:

\begin{itemize}
	
	\item{\textbf{Beamforming via QFT extension:} We develop an extension for the QFT algorithm described in Part I to solve SLqP rate optimization problems for multiuser, multiantenna MIMO networks through beamforming. As with the power control setting, this enables us to focus on cell-edge users by optimizing lower-percentile throughput.}
	
	\item{\textbf{Equivalence of SGqP-WMSE minimization and sum-percentile rate maximization problems:} We introduce a new class of optimization problems, called sum-greatest-$q^\mathrm{th}$-percentile weighted mean squared error (SGqP-WMSE) minimization which subsumes the well-known sum-WMSE (WMMSE) and max-WMSE minimization problems from prior works. Crucially, we prove that solving the SGqP-WMSE minimization problem for a given choice of percentile via a cyclic optimization approach leads to an effective solution of the SLqP rate beamforming problem. This allows us to utilize a powerful unified algorithmic strategy to optimize \textit{all} SLqP rate problems, while also greatly simplifying convergence analysis compared to prior works (e.g. for special cases like the max-min-rate beamforming problem in \cite{razaviyayn_linear_2013} and \cite{scutari_parallel_2017_1}).}

	\item{\textbf{Hybrid Utility Functions:} By employing the calculus of convex function composition, we demonstrate that the proposed algorithmic framework can be extended to optimize \textit{any} concave and non-decreasing utility function. In particular, this includes as special cases well-known non-smooth optimization problems, such as the multitone max-min sum-rate optimization problem in \cite{zhi-quan_luo_dynamic_2008} and the aforementioned weighted max-min-rate optimization in \cite{scutari_parallel_2017_1,scutari_parallel_2017}. We also show how to combine conventional utility functions (e.g., PF utility) with percentile utility functions in order to guarantee fairness for all users.}
	
	\item{\textbf{Long-term average sum-percentile throughput optimization:} Applying standard long-term average utility decomposition methods is not possible for general sum-least percentile utility problems owing to the non-smoothness of the utility functions. Instead, we decompose the problem of long-term average sum-least-percentile rate optimization into a sequence of tractable optimization problems in each time slot. We then apply the proposed optimization algorithms to these sub-problems to compute effective solutions.}
\end{itemize}

\vspace{-1.50em}

\subsection{Notation}
The notation utilized for Part II is largely identical to that for Part I, with a few minor additions and changes. We indicate the set of complex numbers by $\mathbb{C}$. The conjugate of a complex scalar $z$ is denoted by $z^{*}$, while the conjugate transpose of a complex vector $\mathbf{z}$ and complex matrix $\mathbf{Z}$ are denoted by $\mathbf{z}^{\dagger}$ and $\mathbf{Z}^{\dagger}$ respectively. The vectorization of a matrix $\mathbf{X}$ is denoted by $\mathrm{vec}(\mathbf{X})$.

\subsection{Organization}
Part II is organized as follows:
\begin{itemize}
	\item In Section \ref{MQFT}, we formulate the percentile throughput beamforming optimization problem and propose an extension of the QFT algorithm.
	\item We then introduce the class of SGqP-WMSE problems and elucidate their connection with the SLqP-rate problems in Section \ref{SGqPsection}. 
	\item Next, we consider the construction and optimization of hybrid utility functions in Section \ref{hybrids}. 
	\item This is followed by ergodic long-term utility maximization in Section \ref{long_term} and conclusions in Section \ref{conklusion}.
	\item As with Part I, all sections follow an identical order: we first formulate the problem, then describe the proposed approach(es) and finally present numerical results at the end of section.
\end{itemize}
Readers who have no prior context and are interested in mathematical definitions and properties of the percentile utility functions considered in this work are referred to Part I for these details. Part I also focuses on short-term (i.e., single time slot) power control for a multiuser single-input, single-output (SISO) network for both the parallel Gaussian channel and interference-limited scenarios. It also establishes the complexity status of percentile throughput optimization in the latter setting.   

\section{Beamforming for Sum-Percentile Rate Optimization}\label{MQFT}
\subsection{Problem Formulation}\label{QFT_BF_model_formulation}
\small
\begin{figure*}[!t]
	\footnotesize
	\hrulefill
	\begin{equation}
	r_{k,b}\left(\mathbf{V}\right)=\mathrm{log}\left(1+\mathbf{v}_{k,b}^{\dagger}\mathbf{H}_{b\rightarrow k,b}^{\dagger}\left(\sigma^{2}\mathbf{I}+\sum_{\left(b',k'\right)\neq\left(b,k\right)}\mathbf{H}_{b'\rightarrow k,b}\mathbf{v}_{k',b'}\mathbf{v}_{k',b'}^{\dagger}\mathbf{H}_{b'\rightarrow k,b}^{\dagger}\right)^{-1}\mathbf{H}_{b\rightarrow k,b}\mathbf{v}_{k,b}\right)\label{data_rate_beamforming}
	\end{equation}
	\begin{equation}
	\hat{r}_{k,b}\left(\boldsymbol{X},\mathbf{V}\right)=\mathrm{log}\left(1+2\mathrm{Re}\left\{ \mathbf{\boldsymbol{\chi}}_{k,b}^{\dagger}\mathbf{H}_{b\rightarrow k,b}\mathbf{v}_{k,b}\right\} -\mathbf{\boldsymbol{\chi}}_{k,b}^{\dagger}\left(\sigma^{2}\mathbf{I}+\sum_{\left(k',b'\right)\neq\left(k,b\right)}\mathbf{H}_{b'\rightarrow k,b}\mathbf{v}_{k',b'}\mathbf{v}_{k',b'}^{\dagger}\mathbf{H}_{b'\rightarrow k,b}^{\dagger}\right)^{-1}\mathbf{\boldsymbol{\chi}}_{k,b}\right)\label{transformed_rates_QFT}
\end{equation}	
	\hrulefill
\end{figure*}
\normalsize
Consider the downlink of a network comprising $B$ cells, each with a single BS serving $K_{BS}$ users within its cell; thus, the total number of users in the network is $K=BK_{BS}$. We assume that BS $b$ is equipped with $M_b>1$ transmit antennas, while user $k$ being served by this BS is equipped with $N_{k,b}$ receive antennas. Accordingly, we denote the downlink channel from BS $b'$ to the aforementioned user as $\mathbf{H}_{b'\rightarrow k,b}\in\mathbb{C}^{N_{k_{b}}\times M_{b}}$. Thus, $b'=b$ denotes the information-bearing channel from BS $b$ to the $k^\mathrm{th}$ user within its cell; conversely, $b'\neq{b}$ denotes an interference channel. For simplicity, we assume that each user is served using only one data stream; we denote the transmit beamforming weight from BS $b$ to user $k$ by $\mathbf{v}_{k,b}\in\mathbb{C}^{M_{b}\times{1}}$. To avoid notational clutter, we collect all the beamforming variables in the matrix $\mathbf{V}$, the composition of which is given by:
Then the composition of the matrix $\mathbf{V}$, which is used to collect the beamforming weights, is given by
\[
\mathbf{V}\triangleq\left[\mathbf{v}_{1,1}\mathbf{v}_{2,1},\ldots,\mathbf{v}_{K_{B},B}\right]
\].
The achieved rate for the user in question is then given by (\ref{data_rate_beamforming}), where $\sigma^{2}$ represents the user's receiver noise power. 

We collect the user rates in the $K_{BS}\times{B}$ rates matrix $\mathbf{R}\left(\mathbf{V}\right)$, i.e.,
\[
\mathbf{R}\left(\mathbf{V}\right)_{k,b}=r_{k,b}\left(\mathbf{V}\right)
\]
It follows that the problem of optimizing the network-wide SLqP rate is then given by:
\begin{subequations}\label{SPR_problem_BF}
	\begin{align}
	\underset{\mathbf{V^{\mathit{}}}}{\mathrm{maximize}}\quad f_{K_{q}}\left(\mathrm{vec}\left(\mathbf{R}\left(\mathbf{V}\right)\right)\right)\hspace{8.95em}\label{SPR_obj_BF}\\
	\mathrm{subject\,to}\quad\sum_{k=1}^{K_{BS}}\left\Vert \mathbf{v}_{k,b}\right\Vert _{2}^{2}\leq P_{\mathrm{max}};\quad b=1,\ldots,B\label{SPR_constraint_BF}\hspace{1.00em}
	\end{align}
\end{subequations} 
where the constraint in (\ref{SPR_constraint_BF}) ensures that the total transmitted power at each BS is below a maximum threshold of $P_{\mathrm{max}}$. We focus on SLqP rate optimization rather than least-$q^{\mathrm{th}}$-percentile (LqP) rate (i.e., the $K_{q}^{\mathrm{th}}$ smallest rate) owing to the latter's undesirable property in the context of cell-edge throughput maximization. As in Part I, the choice of percentile parameter $q$ allows us to control the tradeoff between favouring cell-edge and cell-centre users. For example, selecting $q=5$ would enable us to optimize $5^{\mathrm{th}}$-percentile throughput in accordance with the targets for next-generation wireless cellular networks.
\begin{remark}
	The percentile program in (\ref{SPR_problem_BF}) is equivalent to maximization of the network sum-rate when $K_q=K$ and maximization of the minimum rate when $K_q=1$.
\end{remark}
The above remark follows directly from Property 4 of the SGqP and SLqP utility functions established in Part I.
\begin{remark}
	Problem (\ref{SPR_constraint_BF}) is non-convex for all values of $K_{q}$, and non-smooth for $K_q\neq{K}$ (i.e., $q=100$).
\end{remark}
The non-smoothness of Problem (\ref{SPR_problem_BF}) follows directly from Property 5 of the SLqP and SGqP utility functions established in Part 1. For the non-convexity, we note that for $K_q\neq{1}$, the sum-of-rates function is non-convex in the beamforming variables in general. Since the SLqP utility function is a pointwise minimum over the \textit{all} sum-rates achieved by every possible subset of $K_q$ users, it follows that the objective in (\ref{SPR_obj_BF}) is also non-convex in the beamforming variables.

For $K_q=1$, we note that the classic max-min-rate beamforming problem can be written in (smooth) hypograph form by introducing a slack variable $t$ as follows:
\begin{subequations}\label{maxmin_problem_BF}
	\begin{align}
	\underset{t,\mathbf{V}}{\mathrm{maximize}}\quad t\hspace{16.400em}\label{maxmin_obj}\\
	\mathrm{subject\,to}\quad\sum_{k=1}^{K_{BS}}\left\Vert \mathbf{v}_{k,b}\right\Vert _{2}^{2}\leq P_{\mathrm{max}};\quad b=1,\ldots,B\label{maxmin_constraint_BF}\hspace{1.30em}\\
	r_{k,b}\left(\mathbf{V}\right)\geq t;\quad\begin{array}{c}
	b=1,\ldots,B\\
	k=1,\ldots,K_{BS}
	\end{array}\label{epigraph_rate_maxmin}\hspace{0.150em}
	\end{align}
\end{subequations} 

The user rates $r_{k,b}\left(\mathbf{V}\right)$ are no longer quasi-concave in the beamforming variables; hence, the $t$-superlevel set in (\ref{epigraph_rate_maxmin}) is non-convex in general, rendering Problem (\ref{maxmin_problem_BF}) non-convex overall when either the number of transmit antennas per BS or receive antennas per user is more than 1. 

The non-convexity and non-smoothness of Problem (\ref{SPR_problem_BF}) make optimization extremely challenging. On the one hand, traditional gradient-based algorithms are not applicable owing to the non-smoothness of the percentile utility functions. On the other hand, smoothing the problem via hypograph reformulation leads to a combinatorial number of non-convex constraints when $K_q\neq{1}$, rendering this approach unsuitable for all SLqP rate maximization problems other than the max-min-rate optimization problem. 

Instead, we develop two distinct minorization-maximization approaches to tackle this class of problems. Similar to the power control problem, these approaches introduce auxiliary variables that transform the original non-convex rate expressions into equivalent block-concave functions. By iteratively optimizing one block of variables while holding the others fixed, the proposed algorithms then converge to directional stationary points of the original objective function. 

\noindent\textit{Differences from power control:} A natural starting point for this is to extend the QFT approach developed in Part I for the beamforming setting. The LFT approach cannot, however, be straightforwardly adapted for the multidimensional setting (since it does not result in an equivalent objective function that is block-concave in the power variables). Furthermore, the expressions for signal and interference power are now quadratic functions of the optimization variables (unlike the power control setting, in which they are linear), thereby making approaches like successive convex approximation even more difficult to implement. This prompts us to adapt the well-known WMMSE approach for this class of problems.

\vspace{-0em}
\subsection{Multidimensional Quadratic Fractional Transform (MQFT)}\label{MQFTmethod}
We begin by re-stating the multidimensional complex-valued extension of the quadratic fractional transform, first proposed by Benson \cite{benson_global_2004} and detailed by Shen and Yu \cite{shen_fractional_2018}: 

\begin{lemma}\label{quadratic_transform}
	Let $\mathbf{a}(\mathbf{V}):\mathrm{\mathbb{C}}^{d_{1}\times{1}}\mapsto\mathbb{C}^{d_{2}\times{1}}$, $\mathbf{B}(\mathbf{V}):\mathrm{\mathbb{C}}^{d_{1}\times{1}}\mapsto\mathbb{S}_{++}^{d_{2}\times d_{2}}$. Then we have
	\begin{equation}
	\mathbf{a^{\dagger}}(\mathbf{V})\mathbf{B}^{-1}(\mathbf{V})\mathbf{a}(\mathbf{V})=\underset{\boldsymbol{\chi}}{\mathrm{max}}\quad2\mathrm{Re}\left\{ \mathbf{\boldsymbol{\chi}}^{\dagger}\mathbf{a}\left(\mathbf{V}\right)\right\} -\mathbf{\boldsymbol{\chi}}^{\dagger}\mathbf{B}(\mathbf{V})\mathbf{\boldsymbol{\chi}}\label{QFT_transform}
	\end{equation}
	where $\boldsymbol{\chi}\in\mathbb{C}^{d_{2}\times1}$ is an auxiliary variable.
\end{lemma}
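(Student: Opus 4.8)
The plan is to observe that, for each fixed value of $\mathbf{V}$ in the common domain of $\mathbf{a}(\cdot)$ and $\mathbf{B}(\cdot)$, the claimed identity is a purely finite-dimensional statement about a strictly concave quadratic form over $\mathbb{C}^{d_{2}}$: with $\mathbf{a}\triangleq\mathbf{a}(\mathbf{V})$ and $\mathbf{B}\triangleq\mathbf{B}(\mathbf{V})$ treated as constants, we must maximize $g(\boldsymbol{\chi})\triangleq 2\mathrm{Re}\{\boldsymbol{\chi}^{\dagger}\mathbf{a}\}-\boldsymbol{\chi}^{\dagger}\mathbf{B}\boldsymbol{\chi}$ over $\boldsymbol{\chi}\in\mathbb{C}^{d_{2}\times 1}$. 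Since $\mathbf{B}\in\mathbb{S}_{++}^{d_{2}\times d_{2}}$, the inverse $\mathbf{B}^{-1}$ exists and is again Hermitian positive definite, so $-\boldsymbol{\chi}^{\dagger}\mathbf{B}\boldsymbol{\chi}$ is strictly concave and $g$ attains a unique global maximum; no joint structure in $\mathbf{V}$ is needed, and the maximum on the right-hand side is attained pointwise in $\mathbf{V}$.

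The cleanest route, avoiding Wirtinger-calculus bookkeeping, is completion of the square. I would verify the algebraic identity
\[
g(\boldsymbol{\chi})=\mathbf{a}^{\dagger}\mathbf{B}^{-1}\mathbf{a}-\big(\boldsymbol{\chi}-\mathbf{B}^{-1}\mathbf{a}\big)^{\dagger}\mathbf{B}\big(\boldsymbol{\chi}-\mathbf{B}^{-1}\mathbf{a}\big),
\]
by expanding the right-hand side and using $\mathbf{B}^{\dagger}=\mathbf{B}$, $(\mathbf{B}^{-1})^{\dagger}=\mathbf{B}^{-1}$, together with $\boldsymbol{\chi}^{\dagger}\mathbf{a}+\mathbf{a}^{\dagger}\boldsymbol{\chi}=2\mathrm{Re}\{\boldsymbol{\chi}^{\dagger}\mathbf{a}\}$. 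Because $\mathbf{B}\succ 0$, the quadratic form $\big(\boldsymbol{\chi}-\mathbf{B}^{-1}\mathbf{a}\big)^{\dagger}\mathbf{B}\big(\boldsymbol{\chi}-\mathbf{B}^{-1}\mathbf{a}\big)$ is nonnegative and vanishes if and only if $\boldsymbol{\chi}=\mathbf{B}^{-1}\mathbf{a}$; hence $g(\boldsymbol{\chi})\leq\mathbf{a}^{\dagger}\mathbf{B}^{-1}\mathbf{a}$ for all $\boldsymbol{\chi}$, with equality at $\boldsymbol{\chi}^{\star}=\mathbf{B}^{-1}(\mathbf{V})\,\mathbf{a}(\mathbf{V})$, which is exactly the claim. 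This presentation simultaneously gives existence, uniqueness, the optimal value, and the optimizer $\boldsymbol{\chi}^{\star}$ — the last being precisely what is substituted to obtain (\ref{transformed_rates_QFT}). Equivalently, one could set the Wirtinger derivative $\partial g/\partial\boldsymbol{\chi}^{*}=\mathbf{a}-\mathbf{B}\boldsymbol{\chi}$ to zero and substitute back, but I would favour the completion-of-the-square version as self-contained.

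There is no real obstacle here; the only points requiring care are (i) the conjugate-transpose algebra in the complex case — in particular noting that $\mathbf{a}^{\dagger}\mathbf{B}^{-1}\mathbf{a}$ is real, since $\mathbf{B}^{-1}$ is Hermitian, so that it legitimately appears on the left-hand side without a $\mathrm{Re}\{\cdot\}$ — and (ii) being explicit that the identity is asserted pointwise over the domain of $\mathbf{a}(\cdot),\mathbf{B}(\cdot)$. This reproduces, for completeness, the argument of Benson \cite{benson_global_2004} and Shen and Yu \cite{shen_fractional_2018}.
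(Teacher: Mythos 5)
Your proof is correct and takes essentially the same route as the paper --- completing the square in $\boldsymbol{\chi}$ and reading off the maximizer $\boldsymbol{\chi}^{\star}=\mathbf{B}^{-1}(\mathbf{V})\mathbf{a}(\mathbf{V})$ --- only written out in full where the paper defers the algebra to Shen and Yu. (Incidentally, your expression for the maximizer is the dimensionally correct one; the paper's (\ref{optimal_quadratic_variables}) writes $\mathbf{a}^{\dagger}(\mathbf{V})$ where it should be $\mathbf{a}(\mathbf{V})$.)
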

\begin{proof}
	The expression on the right-hand side of (\ref{QFT_transform}) is concave in $\boldsymbol{\chi}$ and can be manipulated by completing the square as in \cite{shen_fractional_2018}. It then becomes clear that the expression is maximized when the auxiliary variable $\boldsymbol{\chi}$ is chosen as:
	\begin{equation}\label{optimal_quadratic_variables}
	\mathbf{\boldsymbol{\chi}^{\mathrm{opt}}=B}^{-1}(\mathbf{V})\mathbf{a^{\dagger}}(\mathbf{V})
	\end{equation}
	Substituting this back into the right-hand side of (\ref{QFT_transform}) and simplifying yields the desired result.
\end{proof}
Based on this transform, we now proceed to derive a reformulation of the problem in (\ref{SPR_problem_BF}) which is amenable to iterative optimization.

\begin{theorem}\label{quadratic_SPR_equivalence}
	The optimization problem in (\ref{SPR_problem_BF}) is equivalent to the following auxiliary problem where, for notational clarity, we once again collect the auxiliary variables $\boldsymbol{\chi}_{k,b}$ in $\boldsymbol{X}$:
	\begin{subequations}\label{SPR_problem_quad}
		\begin{align}
		\underset{\boldsymbol{X},\mathbf{V^{\mathit{}}}}{\mathrm{maximize}}\quad f_{K_{q}}\left(\mathrm{vec}\left(\hat{\mathbf{R}}\left(\boldsymbol{X},\mathbf{V}\right)\right)\right)\hspace{5.85em}\label{SPR_obj_BF_transformed}\\
		\mathrm{subject\,to}\quad\sum_{k=1}^{K_{BS}}\left\Vert \mathbf{v}_{k,b}\right\Vert _{2}^{2}\leq P_{\mathrm{max}};\quad b=1,\ldots,B\label{sum_power_constraint_quad_BF}
		\end{align}
	\end{subequations}
	where $\hat{\mathbf{R}}\left(\boldsymbol{X},\mathbf{V}\right)$ is the $K\times{B}$ auxiliary objective matrix whose entries are given by
\[
\hat{\mathbf{R}}\left(\boldsymbol{X},\mathbf{V}\right)_{k,b}=\hat{r}_{k,b}\left(\boldsymbol{X},\mathbf{V}\right)
\]
	and $\hat{r}_{k,b}\left(\boldsymbol{X},\mathbf{V}\right)$ is given by (\ref{transformed_rates_QFT}).
	The equivalence between the two problems is in the sense of the optimal variables and objective function values.
\end{theorem}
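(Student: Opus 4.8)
The plan is to derive (\ref{SPR_problem_quad}) by applying the quadratic fractional transform of Lemma~\ref{quadratic_transform} \emph{entrywise} to the rate matrix $\mathbf{R}(\mathbf{V})$, and then to transfer the resulting pointwise identity through the outer utility $f_{K_q}$ using its monotonicity. First I would observe that, for each user $(k,b)$, the argument of the logarithm in (\ref{data_rate_beamforming}) has exactly the form $\mathbf{a}^{\dagger}(\mathbf{V})\mathbf{B}^{-1}(\mathbf{V})\mathbf{a}(\mathbf{V})$ demanded by Lemma~\ref{quadratic_transform}, with $\mathbf{a}(\mathbf{V})=\mathbf{H}_{b\rightarrow k,b}\mathbf{v}_{k,b}$ and $\mathbf{B}(\mathbf{V})=\sigma^{2}\mathbf{I}+\sum_{(k',b')\neq(k,b)}\mathbf{H}_{b'\rightarrow k,b}\mathbf{v}_{k',b'}\mathbf{v}_{k',b'}^{\dagger}\mathbf{H}_{b'\rightarrow k,b}^{\dagger}$. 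The hypotheses of the lemma are met since $\mathbf{B}(\mathbf{V})\succeq\sigma^{2}\mathbf{I}\succ\mathbf{0}$ for every feasible $\mathbf{V}$, so $\mathbf{B}(\mathbf{V})\in\mathbb{S}_{++}$. Lemma~\ref{quadratic_transform} then yields $\mathrm{SINR}_{k,b}(\mathbf{V})=\max_{\boldsymbol{\chi}_{k,b}}\bigl(2\mathrm{Re}\{\boldsymbol{\chi}_{k,b}^{\dagger}\mathbf{H}_{b\rightarrow k,b}\mathbf{v}_{k,b}\}-\boldsymbol{\chi}_{k,b}^{\dagger}\mathbf{B}(\mathbf{V})\boldsymbol{\chi}_{k,b}\bigr)$, with unique maximizer $\boldsymbol{\chi}_{k,b}^{\mathrm{opt}}(\mathbf{V})$ given by (\ref{optimal_quadratic_variables}).

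Next, since $t\mapsto\log(1+t)$ is strictly increasing, the maximization over $\boldsymbol{\chi}_{k,b}$ commutes with it, giving $r_{k,b}(\mathbf{V})=\max_{\boldsymbol{\chi}_{k,b}}\hat{r}_{k,b}(\boldsymbol{X},\mathbf{V})$ with $\hat{r}_{k,b}$ as in (\ref{transformed_rates_QFT}); note that although $\hat{r}_{k,b}$ is written as a function of all of $\boldsymbol{X}$, it in fact depends only on the single block $\boldsymbol{\chi}_{k,b}$. This block-separability is the structural fact I would lean on: the auxiliary vector $\boldsymbol{\chi}_{k,b}$ influences \emph{only} the $(k,b)$-th entry of $\hat{\mathbf{R}}(\boldsymbol{X},\mathbf{V})$, so for any fixed feasible $\mathbf{V}$ the choice $\boldsymbol{\chi}_{k,b}=\boldsymbol{\chi}_{k,b}^{\mathrm{opt}}(\mathbf{V})$ for every $(k,b)$ \emph{simultaneously} maximizes all entries; hence $\hat{\mathbf{R}}(\boldsymbol{X},\mathbf{V})\preceq\mathbf{R}(\mathbf{V})$ entrywise for all $\boldsymbol{X}$, with equality attained at this choice.

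I would then invoke the property, established in Part~I, that the SLqP utility $f_{K_q}$ is non-decreasing in each of its arguments. Combined with the entrywise domination above, this gives, for every feasible $\mathbf{V}$,
\[
\max_{\boldsymbol{X}}\,f_{K_q}\bigl(\mathrm{vec}(\hat{\mathbf{R}}(\boldsymbol{X},\mathbf{V}))\bigr)=f_{K_q}\bigl(\mathrm{vec}(\mathbf{R}(\mathbf{V}))\bigr),
\]
the maximum being attained at $\boldsymbol{\chi}_{k,b}=\boldsymbol{\chi}_{k,b}^{\mathrm{opt}}(\mathbf{V})$ for all $(k,b)$. Because the constraint (\ref{sum_power_constraint_quad_BF}) is identical to (\ref{SPR_constraint_BF}) and involves only $\mathbf{V}$, performing this partial maximization over $\boldsymbol{X}$ reduces (\ref{SPR_problem_quad}) to (\ref{SPR_problem_BF}). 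Consequently the two problems share the same optimal value; any optimal $\mathbf{V}$ of one is optimal for the other; and from an optimal $\mathbf{V}^{\star}$ of (\ref{SPR_problem_BF}) one recovers an optimal pair of (\ref{SPR_problem_quad}) by setting $\boldsymbol{\chi}_{k,b}^{\star}=\boldsymbol{\chi}_{k,b}^{\mathrm{opt}}(\mathbf{V}^{\star})$.

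Most of this is bookkeeping once Lemma~\ref{quadratic_transform} is available; the one step that genuinely requires care — and which I expect to be the crux — is justifying that the partial maximization over $\boldsymbol{X}$ may be pushed through the \emph{non-smooth, non-separable} outer map $f_{K_q}$. This is precisely where the pairing of (i) block-separability of $\boldsymbol{X}$ across the entries of $\hat{\mathbf{R}}$ and (ii) coordinatewise monotonicity of $f_{K_q}$ is indispensable: separability alone would not guarantee that the entrywise maxima are jointly attainable were the variables coupled, while monotonicity alone would not help a utility that, unlike weighted sum-rate, cannot be written as a sum of per-user terms.
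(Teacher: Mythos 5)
Your proposal is correct and follows essentially the same route as the paper's proof: apply Lemma~\ref{quadratic_transform} entrywise to each SINR, then use the coordinatewise monotonicity of $f_{K_q}$ to pull the maximization over $\boldsymbol{X}$ outside the utility. You are in fact slightly more careful than the paper, explicitly verifying $\mathbf{B}(\mathbf{V})\succ\mathbf{0}$ and spelling out the block-separability of the $\boldsymbol{\chi}_{k,b}$ that makes the entrywise maxima simultaneously attainable, a point the paper leaves implicit in its $\max_{\boldsymbol{X}}$-of-a-vector notation.
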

\begin{proof}
	Observe that from Lemma \ref{quadratic_transform}, we have 
	\begin{equation}
	r_{k,b}\left(\mathbf{V}\right)=\underset{\boldsymbol{\chi}_{k,b}}{\mathrm{max}}\quad\hat{r}_{k,b}\left(\boldsymbol{X},\mathbf{V}\right)\label{QFT_rate_equivalence}
	\end{equation}
	where we identify 	
	\begin{equation}
	\mathbf{a}\left(\mathbf{V}\right)=\mathbf{H}_{b\rightarrow k,b}\mathbf{v}_{k,b}
	\end{equation}
	\begin{equation}
	\mathbf{B}\left(\mathbf{V}\right)=\sigma^{2}\mathbf{I}+\sum_{\left(k',b'\right)\neq\left(k,b\right)}\mathbf{H}_{b'\rightarrow k,b}\mathbf{v}_{k',b'}\mathbf{v}_{k',b'}^{\dagger}\mathbf{H}_{b'\rightarrow k,b}^{\dagger}
	\end{equation} 
	Hence it follows that
	\begin{subequations}
		\begin{align}
		f_{K_{q}}\left(\mathrm{vec}\left(\mathbf{R}\left(\mathbf{V}\right)\right)\right)=f_{K_{q}}\left(\underset{\boldsymbol{X}}{\mathrm{max}}\quad\mathrm{vec}\left(\hat{\mathbf{R}}\left(\mathbf{\boldsymbol{X}},\mathbf{V}\right)\right)\right)\label{quad_SLqP_equiv_1}\\
		=\underset{\boldsymbol{X}}{\mathrm{max}}\quad f_{K_{q}}\left(\mathrm{vec}\left(\hat{\mathbf{R}}\left(\mathbf{\boldsymbol{X}},\mathbf{V}\right)\right)\right)\label{quad_SLqP_equiv_2}
		\end{align}
	\end{subequations}
	where (\ref{quad_SLqP_equiv_1}) follows directly from (\ref{QFT_rate_equivalence}) and (\ref{quad_SLqP_equiv_2}) follows since the SLqP utility function is non-decreasing in each individual argument (according to Property 2 in Part I), allowing us to move the $\mathrm{max}\left(\cdot\right)$ operator outside $f_{K_{q}}\left(\cdot\right)$.
\end{proof}
	\begin{theorem}
		Problem (\ref{SPR_problem_quad}) is block-concave in $\boldsymbol{X}$ and $\mathbf{V}$.
	\end{theorem}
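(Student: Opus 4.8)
The plan is to verify block-concavity in two separate steps: first fixing $\boldsymbol{X}$ and showing concavity in $\mathbf{V}$, then fixing $\mathbf{V}$ and showing concavity in $\boldsymbol{X}$. For the first step, I would examine the transformed rate expression $\hat{r}_{k,b}(\boldsymbol{X},\mathbf{V})$ in (\ref{transformed_rates_QFT}) with $\boldsymbol{\chi}_{k,b}$ held constant. The argument of the logarithm contains a term $2\mathrm{Re}\{\boldsymbol{\chi}_{k,b}^{\dagger}\mathbf{H}_{b\rightarrow k,b}\mathbf{v}_{k,b}\}$, which is affine (hence concave) in $\mathbf{v}_{k,b}$, and a term $-\boldsymbol{\chi}_{k,b}^{\dagger}(\sigma^{2}\mathbf{I}+\sum_{(k',b')\neq(k,b)}\mathbf{H}_{b'\rightarrow k,b}\mathbf{v}_{k',b'}\mathbf{v}_{k',b'}^{\dagger}\mathbf{H}_{b'\rightarrow k,b}^{\dagger})^{-1}\boldsymbol{\chi}_{k,b}$. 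The key subtlety is that the inverse matrix appears here, not the matrix itself, so I would instead note that after applying Lemma \ref{quadratic_transform} \emph{again} to this interference term — or more directly, by recognizing that $\boldsymbol{\chi}_{k,b}^{\dagger}\mathbf{B}^{-1}\boldsymbol{\chi}_{k,b}$ would reintroduce the fractional structure. The cleaner route, which I expect the authors take, is to observe that once $\boldsymbol{\chi}_{k,b}$ is fixed, the quantity $\boldsymbol{\chi}_{k,b}^{\dagger}\mathbf{H}_{b'\rightarrow k,b}\mathbf{v}_{k',b'}\mathbf{v}_{k',b'}^{\dagger}\mathbf{H}_{b'\rightarrow k,b}^{\dagger}\boldsymbol{\chi}_{k,b} = |\mathbf{v}_{k',b'}^{\dagger}\mathbf{H}_{b'\rightarrow k,b}^{\dagger}\boldsymbol{\chi}_{k,b}|^{2}$ is convex in $\mathbf{v}_{k',b'}$ (a squared modulus of an affine function), so its negative is concave; summing concave functions and adding the affine signal term keeps the logarithm's argument concave in $\mathbf{V}$. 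Then, since $\log(\cdot)$ is concave and non-decreasing, the composition $\hat{r}_{k,b}$ is concave in $\mathbf{V}$.

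For the second step, fixing $\mathbf{V}$ and varying $\boldsymbol{X}$, the expression $\hat{r}_{k,b}(\boldsymbol{X},\mathbf{V})$ depends only on $\boldsymbol{\chi}_{k,b}$, and its argument $1 + 2\mathrm{Re}\{\boldsymbol{\chi}_{k,b}^{\dagger}\mathbf{a}(\mathbf{V})\} - \boldsymbol{\chi}_{k,b}^{\dagger}\mathbf{B}(\mathbf{V})\boldsymbol{\chi}_{k,b}$ is exactly the concave function of $\boldsymbol{\chi}$ identified in the proof of Lemma \ref{quadratic_transform} (affine minus a positive-definite quadratic form, since $\mathbf{B}(\mathbf{V}) \in \mathbb{S}_{++}$). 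Again composing with the concave, non-decreasing logarithm preserves concavity in $\boldsymbol{\chi}_{k,b}$, hence in $\boldsymbol{X}$.

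Having established that each $\hat{r}_{k,b}$ is concave in $\mathbf{V}$ (for fixed $\boldsymbol{X}$) and concave in $\boldsymbol{X}$ (for fixed $\mathbf{V}$), the final step is to propagate this through the SLqP utility function $f_{K_q}$. Here I would invoke the structural properties of $f_{K_q}$ established in Part I: it is concave and non-decreasing in each argument. Then, for fixed $\boldsymbol{X}$, the map $\mathbf{V} \mapsto f_{K_q}(\mathrm{vec}(\hat{\mathbf{R}}(\boldsymbol{X},\mathbf{V})))$ is a composition of a concave non-decreasing function with a vector of concave functions, which is concave by the standard composition rule; symmetrically for fixed $\mathbf{V}$. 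Finally, I would note that the feasible set defined by (\ref{sum_power_constraint_quad_BF}) is convex (intersection of Euclidean norm-ball constraints per BS) and does not couple $\boldsymbol{X}$ with $\mathbf{V}$, so restricting to it preserves block-concavity. The main obstacle is the first step: one must be careful that the interference term in (\ref{transformed_rates_QFT}) enters as a \emph{squared modulus} of an affine function of $\mathbf{V}$ — making its negation concave — rather than naively treating the matrix inverse, which would destroy the argument; verifying this requires the explicit substitution $\mathbf{a}(\mathbf{V}) = \mathbf{H}_{b\rightarrow k,b}\mathbf{v}_{k,b}$ and $\mathbf{B}(\mathbf{V})$ from the proof of Theorem \ref{quadratic_SPR_equivalence} and expanding the quadratic form term by term.
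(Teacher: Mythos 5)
Your proof is correct and follows essentially the same route as the paper: show each $\hat{r}_{k,b}$ is concave in $\mathbf{V}$ for fixed $\boldsymbol{X}$ (affine signal term plus negated convex quadratic interference term, composed with the concave non-decreasing logarithm), symmetrically in $\boldsymbol{X}$ for fixed $\mathbf{V}$, and then compose with the concave, non-decreasing $f_{K_q}$ over the convex, uncoupled feasible set. Your explicit identification of the interference term as a sum of squared moduli of affine functions of $\mathbf{V}$ — and your implicit correction of the spurious matrix inverse in (\ref{transformed_rates_QFT}), which is a typo relative to Lemma \ref{quadratic_transform} — is in fact slightly more careful than the paper's own justification, which appeals only to positive semidefiniteness of the interference-plus-noise matrix.
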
			
\begin{proof}
	To verify the block-concavity of the objective, we begin by considering the beamforming variables $\mathbf{V}$ in the auxiliary rate function in (\ref{transformed_rates_QFT}). Observe that the term $2\mathrm{Re}\left\{ \mathbf{\boldsymbol{\chi}}_{k,b}^{\dagger}\mathbf{H}_{b\rightarrow k,b}\mathbf{v}_{k,b}\right\}$ is linear (and hence concave) in $\mathbf{v}_{k,b}$. Furthermore, note that $\mathbf{H}_{b'\rightarrow k,b}\mathbf{v}_{k',b'}\mathbf{v}_{k',b'}^{\dagger}\mathbf{H}_{b'\rightarrow k,b}^{\dagger}$ is a convex quadratic form. Hence it follows that 
	\[
	-\mathbf{\boldsymbol{\chi}}_{k,b}^{\dagger}\left(\sigma^{2}\mathbf{I}+\sum_{\left(k',b'\right)\neq\left(k,b\right)}\mathbf{H}_{b'\rightarrow k,b}\mathbf{v}_{k',b'}\mathbf{v}_{k',b'}^{\dagger}\mathbf{H}_{b'\rightarrow k,b}^{\dagger}\right)\mathbf{\boldsymbol{\chi}}_{k,b}
	\]
	is concave in $\mathbf{V}$ for fixed $\boldsymbol{\chi}_{k,b}$, since the interference-and-noise term is obviously positive semidefinite. Since $\mathrm{log}\left(\cdot\right)$ is a concave non-decreasing function, it follows that $\hat{r}_{k,b}\left(\boldsymbol{X},\mathbf{V}\right)$ is concave in $\mathbf{V}$ according to the standard rules of convex composition \cite{boyd_convex_2004}. Finally, we note that since $f_{K_{q}}\left(\cdot\right)$ is also a concave function and non-decreasing in each argument (according to Properties 2 and 1 respectively in Part I), it follows that the auxiliary objective $f_{K_{q}}\left(\mathrm{vec}\left(\hat{\mathbf{R}}\left(\mathbf{\boldsymbol{X}},\mathbf{V}\right)\right)\right)$ is concave in $\mathbf{V}$ when $\boldsymbol{X}$ is fixed according to the standard rules of convex composition \cite{boyd_convex_2004}.
	
	The concavity of $f_{K_{q}}\left(\mathrm{vec}\left(\hat{\mathbf{R}}\left(\mathbf{\boldsymbol{X}},\mathbf{V}\right)\right)\right)$ in $\boldsymbol{X}$ when $\mathbf{V}$ is fixed can be proven using a similar argument.
\end{proof}
	
The block-concavity of the auxiliary objective now enables us to derive an efficient algorithm to iteratively optimize the original objective. Observe that when $\mathbf{V}$ is fixed, an optimal choice of auxiliary variable $\boldsymbol{X}$ is given by (\ref{optimal_quadratic_variables}). Crucially, as with the power control problem, since the SLqP utility function is not strictly concave, there may be infinitely many optimal choices of $\boldsymbol{X}$ that optimize the auxiliary objective; however, the choice in (\ref{optimal_quadratic_variables}) maintains equivalence with the original objective as explained in Lemma \ref{quadratic_transform}. 

Conversely, when the auxiliary variables $\boldsymbol{X}$ are held fixed, we note that obtaining the optimal value of the beamforming variables $\mathbf{V}$ in (\ref{SPR_problem_quad}) is a convex optimization problem. Combining these two steps together, we obtain the MQFT algorithm for optimization of beamforming variables in Algorithm \ref{QFTAlg_BF}.

\begin{algorithm}[!h]
	\caption{Multidimensional Quadratic Fractional Transform Algorithm for SLqP Rate Maximization via Beamforming}
	\label{QFTAlg_BF}
	\begin{algorithmic}[1]
		\State \textbf{initialize} $\mathbf{V}$ such that it satisfies (\ref{sum_power_constraint_quad_BF}).
		\For{$i=1,\ldots$}
		\State \textbf{update} $\boldsymbol{X}$ using (\ref{optimal_quadratic_variables}).
		\State \textbf{update} $\mathbf{V}$ by solving (\ref{SPR_problem_quad}) for fixed $\boldsymbol{X}$.
		\EndFor
		\State \textbf{until} some convergence criterion is met.
	\end{algorithmic}
\end{algorithm}
\begin{remark}\label{QFT_nondecreasing_convergence}
	Algorithm \ref{QFTAlg_BF} is guaranteed to be non-decreasing in the original objective (\ref{SPR_obj_BF}) after each iteration.
\end{remark}
\begin{proof}
	The non-decreasing nature of the algorithm can be understood by following the chain of reasoning below, in which we denote the beamforming variables and auxiliary variables at the $i^\mathrm{th}$ iteration as $\mathbf{V}[i]$ and $\boldsymbol{X}[i]$ respectively. 
\begin{subequations}
		\begin{align}
		\hspace{-9em}\begin{array}{c}\hspace{-16.00em}
		f_{K_{q}}\left(\mathrm{vec}\left(\mathbf{R}\left(\mathbf{V}\left[i+1\right]\right)\right)\right)\\
		=\underset{\boldsymbol{X}}{\mathrm{max}}\quad f_{K_{q}}\left(\mathrm{vec}\left(\hat{\mathbf{R}}\left(\mathbf{\boldsymbol{X}},\mathbf{V}\left[i+1\right]\right)\right)\right)
		\end{array}\hspace{-2em}\label{convergence_proof_line_1}\\
		=f_{K_{q}}\left(\mathrm{vec}\left(\hat{\mathbf{R}}\left(\boldsymbol{X}\left[i+1\right],\mathbf{V}\left[i+1\right]\right)\right)\right)\hspace{-1.6em}\label{convergence_proof_line_2}\\
		\geq f_{K_{q}}\left(\mathrm{vec}\left(\hat{\mathbf{R}}\left(\boldsymbol{X}\left[i+1\right],\mathbf{V}\left[i\right]\right)\right)\right)\hspace{0.15em}\label{convergence_proof_line_3}\\
		\geq f_{K_{q}}\left(\mathrm{vec}\left(\hat{\mathbf{R}}\left(\boldsymbol{X}\left[i\right],\mathbf{V}\left[i\right]\right)\right)\right)\hspace{1.87em}\label{convergence_proof_line_4}\\
		=f_{K_{q}}\left(\mathrm{vec}\left(\mathbf{R}\left(\mathbf{V}\left[i\right]\right)\right)\right)\hspace{5.20em}\label{convergence_proof_line_5}
		\end{align}
\end{subequations}	
	
	The equality in (\ref{convergence_proof_line_1}) follows from (\ref{quad_SLqP_equiv_2}), while the subsequent inequality in (\ref{convergence_proof_line_2}) follows since $\boldsymbol{X}\left[i+1\right]$ optimizes the auxiliary objective while the beamforming variables are held fixed. The inequality in (\ref{convergence_proof_line_3}) follows since (\ref{SPR_obj_BF_transformed}) increases when it is optimized with respect to $\boldsymbol{X}$ while $\mathbf{V}$ is fixed. Likewise, the inequality in (\ref{convergence_proof_line_3}) follows since the auxiliary objective increases when it is optimized with respect to $\mathbf{V}$ while $\boldsymbol{X}$ is fixed. The inequality in (\ref{convergence_proof_line_4}) follows from similar reasoning to (\ref{convergence_proof_line_3}) as applied to the auxiliary variables $\boldsymbol{X}$. Finally, the equality in (\ref{convergence_proof_line_5}) follows from (\ref{quad_SLqP_equiv_1}) and (\ref{quad_SLqP_equiv_2}).
\end{proof}
\begin{remark}
	Algorithm \ref{QFTAlg_BF} is a cyclic minorization-maximization algorithm and is guaranteed to converge to a directional stationary point of Problem (\ref{SPR_problem_BF}).
\end{remark}
\begin{proof}
	The proof follows from similar reasoning to Theorem 4 in Part I.
\end{proof}

\begin{remark}
	The application of the quadratic fractional transform is distinct from that in \cite{shen_fractional_2018,shen_fractional_2018-1,khan_optimizing_2020}
\end{remark}
The QFT, as introduced in \cite{shen_fractional_2018}, was originally designed to apply to sum-of-functions-of-ratios (SOFOR) optimization problems. As such, it has primarily been utilized to optimize WSR, since the objective is in SOFOR form. However, the SLqP rate maximization problem is a pointwise minimum over a set of SOFORs. Therefore, the fractional programming approach in \cite{shen_fractional_2018,shen_fractional_2018-1} cannot be directly applied. 

It should be noted that the problem can technically be converted into SOFOR form by introducing slack variables. However, as explained earlier, this introduces a combinatorial number of constraints, which quickly becomes unmanageable. In contrast, the proposed MQFT approach directly tackles the non-smooth problem, thereby side-stepping this issue.

\subsection{Simulation Results}\label{simulation_model}
To illustrate the workings of Algorithm \ref{QFTAlg_BF}, we simulate a network consisting of $7$ hexagonal wrapped-around cells and BSs located at the center of each cell. The distance between adjacent BSs is set as $2000\hspace{0.1em}\mathrm{m}$, and we assume Rayleigh fading with a block-fading model. The path loss between a user and BS separated by a distance of $d$ meters is given by $\left(1+{d}/{d_{0}}\right)^{-\zeta/2}$ where $d_0=0.3920$ is a model-dependent reference distance, and $\zeta=3.76$ is the pathloss exponent. The noise power spectral density (PSD) is $-143\hspace{0.1em}\mathrm{dBm/Hz}$; we assume a system bandwith of $20\hspace{0.1em}\mathrm{MHz}$ and maximum per-user transmit power constraint of $43\hspace{0.1em}\mathrm{dBm}$. This path loss model is chosen in accordance with the COST-231 model which has been widely used in prior works to simulate typical LTE channel conditions \cite{khan_optimizing_2020,shen_fractional_2018-1}. Each base station is now equipped with $M_b=8$ transmit antennas, whereas each user is equipped with $N_{k_b}=2$ receive antennas. There are $K_{BS}=5$ users per cell, leading to a total of $K=35$ users in the network, and we choose $q=5.7$, i.e., we optimize the sum of the smallest $K_q=2$ rates in the entire network. 

We initialize the beamforming weights with an equal-power matched-filtering solution; in other words, the power for each user is set to $\frac{P_\mathrm{max}}{K_{BS}}$ similar to the schemes proposed in \cite{hosseini_optimizing_2018,khan_optimizing_2020}. Figure \ref{QFT_BF_convergence} illustrates the convergence of the network SLqP rate for a random set of channel realizations. There are two points of interest here. First, we note that the auxiliary objective exactly matches the original at each iteration; this is in line with the equivalence derived in Theorem \ref{quadratic_SPR_equivalence}. Second, the objective converges in a non-decreasing fashion as expected from Remark \ref{QFT_nondecreasing_convergence}.

\begin{figure}[t!]
	\begin{center} 
		\includegraphics[trim={0cm 0cm 0cm 0cm},clip,width=0.4\textwidth]{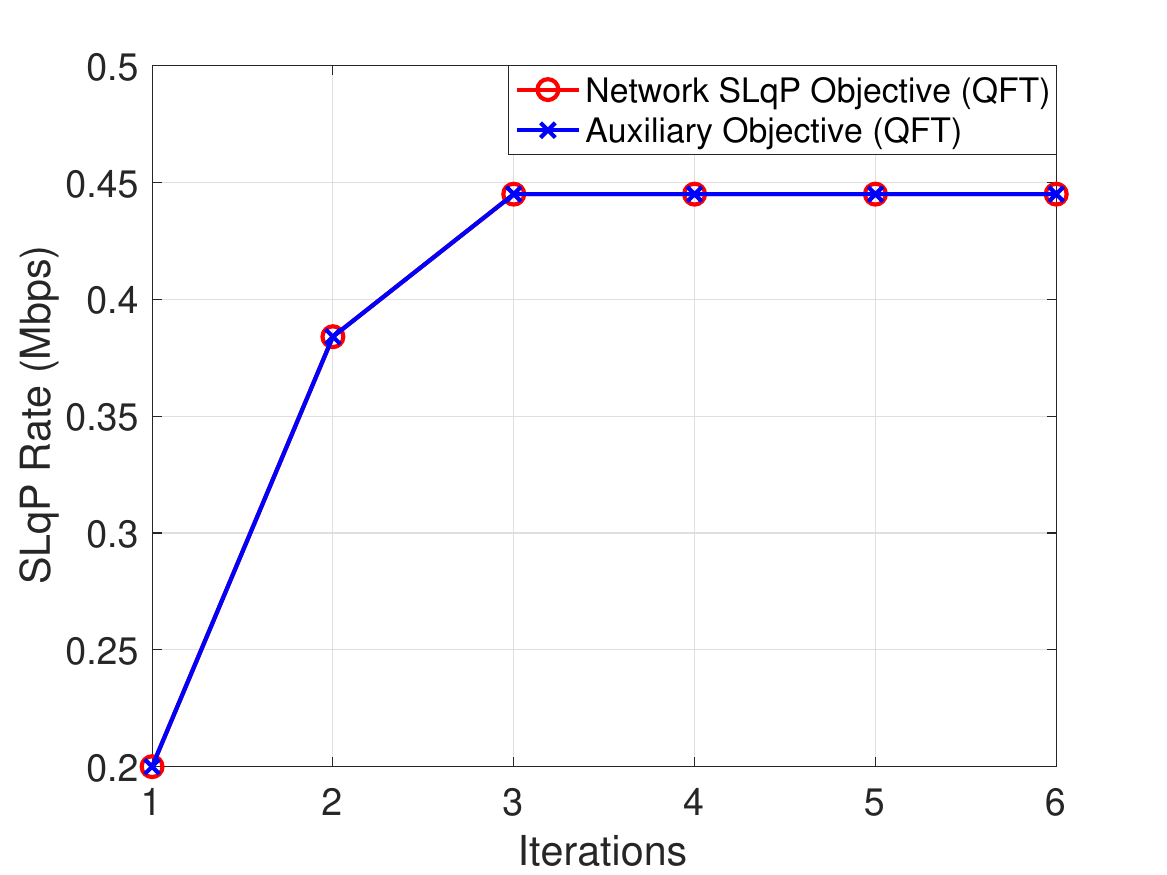}
		\caption{Convergence of SLqP rate objective for $K=35$, $K_{q}=2$.}
		\centering
		\label{QFT_BF_convergence}
	\end{center}
\end{figure} 

\section{Sum-Greatest $q^\mathrm{th}$ Weighted Mean Squared Error (SGqP-WMSE) Minimization}\label{SGqPsection}
\subsection{Background}

In designing an approach to the beamforming problem, it was natural to extend the LFT algorithm as applied to the power control problem. However, as we have seen, the problem in (\ref{SPR_problem_BF}) has its own unique characteristics. We now consider an alternative, simpler approach, to the beamforming problem. The WMMSE algorithm is widely utilized for a variety of resource allocation problems in wireless cellular networks. The relationship between MMSE and rate was first highlighted in \cite{christensen2008weighted} as $R=-\mathrm{log}\left(\mathrm{MSE}\right)$ and an iterative optimization algorithm was proposed which shown to be effective for the WSR maximization problem in wireless networks. Specifically, iterative minimization of a non-convex weighted MSE cost function is shown to effectively optimize network WSR; however, the convergence of the algorithm cannot be guaranteed.

The formulation subsequently proposed by Shi et al. in \cite{shi_iteratively_2011} is the most widely utilized since the associated algorithm is guaranteed to monotonically converge to a stationary point of the original WSR problem. Crucially, the work in \cite{shi_iteratively_2011} also demonstrated that the proposed problem formulation (and hence associated WMMSE algorithm) could be readily adapted for any \textit{smooth}, \textit{strictly concave} and \textit{sum-decomposable} utility function via its inverse gradient map \cite{shi_iteratively_2011}. This has led to its widespread adoption for a variety of resource allocation problems in cellular networks \cite{naghsh_maxmin_2019}. 

However, as we have already established, the SLqP utility functions, with the exception of the sum utility, are neither smooth nor strictly concave and are jointly computed across all users (and hence cannot be decomposed as a sum of per-user utilities). Smoothing introduces a combinatorial number of non-convex constraints, making optimization computationally intractable. Hence, the WMMSE approach cannot be used for SLqP rate optimization problems (other than $q=100$ or sum-rate).

To deal with these issues, we propose a different application of the classic WMMSE approach that enables us to \textit{directly tackle} the class of SLqP rate maximization problems, with specific reference to beamforming optimization in multiantenna networks. We first introduce a completely new problem, which we call the sum-greatest-$q^\mathrm{th}$-percentile weighted mean-squared (SGqP-WMSE) problem. While this might seem strange given that we are interested in SLqP-rate maximization, we demonstrate that there is, in fact, a deep connection between these two seemingly unrelated classes of problems. Specifically, we prove that solving the SGqP-WMSE problem is \textit{equivalent} to solving the problem of maximizing SLqP-rate. Thus, finding an algorithm to solve the former problem gives us another, independent, algorithm to maximize SLqP rate.

For readers who have skipped Part I, we would recommend going through Sections II-A and II-B in order to understand the properties of the SGqP utility function, which we utilize in the derivations that follow. 

\subsubsection{Proposed Approach}
Consider the downlink of a multiantenna network as given by the system model in Section \ref{QFT_BF_model_formulation}. Base station $b$ sends symbol $s_{k,b}\in\mathbb{C}$ to user $k$ within its cell, where we assume that $\mathbb{E}\left[\left|s_{k,b}\right|^{2}\right]=1\hspace{0.30em}\forall{\left(k,b\right)}$. This symbol is encoded using transmit beamformer $\mathbf{v}_{k,b}\in\mathbb{C}^{M_{b}\times1}$, i.e.,:
\[
\mathbf{x}_{k,b}=\mathbf{v}_{k,b}s_{k,b}
\]
\small
\begin{figure*}[!h]
	\footnotesize
	\hrulefill
	\begin{equation}\label{MSE}
	e_{k,b}\triangleq\mathbb{E}_{s,\mathbf{z}}\left[\left|\hat{s}_{k,b}-s_{kb}\right|^{2}\right]=\left|1-\mathbf{u}_{k,b}^{\dagger}\mathbf{H}_{b\rightarrow k,b}\mathbf{v}_{k,b}\right|^{2}+\sum_{\left(b',k'\right)\neq\left(b,k\right)}\mathbf{u}_{k,b}^{}\mathbf{H}_{b'\rightarrow k,b}\mathbf{v}_{k',b'}\mathbf{v}_{k',b'}^{\dagger}\mathbf{H}_{b\rightarrow k,b}^{\dagger}\mathbf{u}_{k,b}^{\dagger}+\sigma^{2}\mathbf{u}_{k,b}^{\dagger}\mathbf{u}_{k,b}
	\end{equation}
	\begin{equation}\label{optimal_receive_BF}
\mathbf{u}_{k,b}^{\mathrm{opt}}=\left(\sigma^{2}\mathbf{I}+\sum_{\left(b',k'\right)}\mathbf{H}_{b'\rightarrow k,b}\mathbf{v}_{k',b'}\mathbf{v}_{k',b'}^{\dagger}\mathbf{H}_{b'\rightarrow k,b}^{\dagger}\right)^{-1}\mathbf{H}_{b\rightarrow k,b}\mathbf{v}_{k,b}
\end{equation}
	\begin{equation}\label{MSE_one_variable}
	r_{k,b}\left(\mathbf{V}\right)=\underset{w_{k,b}}{\mathrm{min}}\quad w_{k,b}\left(\mathbf{I}-\mathbf{v}_{k,b}^{\dagger}\mathbf{H}_{b\rightarrow k,b}^{\dagger}\left(\sigma^{2}\mathbf{I}+\sum_{\left(b',k'\right)}\mathbf{H}_{b'\rightarrow k,b}\mathbf{v}_{k',b'}\mathbf{v}_{k',b'}^{\dagger}\mathbf{H}_{b'\rightarrow k,b}^{\dagger}\right)^{-1}\mathbf{H}_{b\rightarrow k,b}\mathbf{v}_{k,b}\right)-\mathrm{log}\left(w_{k,b}\right)-1
	\end{equation}
	\begin{equation}\label{optimal_wmmse_variable}
	w_{k,b}^{\mathrm{opt}}=\left(\mathbf{I}-\mathbf{v}_{k,b}^{\dagger}\mathbf{H}_{b\rightarrow k,b}^{\dagger}\left(\sigma^{2}\mathbf{I}+\sum_{\left(b',k'\right)}\mathbf{H}_{b'\rightarrow k,b}\mathbf{v}_{k',b'}\mathbf{v}_{k',b'}^{\dagger}\mathbf{H}_{b'\rightarrow k,b}^{\dagger}\right)^{-1}\mathbf{H}_{b\rightarrow k,b}\mathbf{v}_{k,b}\right)^{-1}
	\end{equation}
	\hrulefill
\end{figure*}
\normalsize
The received signal at user $k$ in cell $b$, denoted by $\mathbf{y}_{k,b}\in\mathbb{C}^{N_{k_{b}}\times1}$, is accordingly given by
\[
\mathbf{y}_{k,b}=\mathbf{H}_{b\rightarrow k,b}\mathbf{x}_{k,b}+\sum_{\left(k',b'\right)\neq\left(k,b\right)}\mathbf{H}_{b'\rightarrow k,b}\mathbf{x}_{k',b'}+\mathbf{z}_{k,b}
\]
where $\mathbf{z}_{k,b}\in\mathbb{C}^{N_{k_{b}}\times1}$ denotes the additive white Gaussian noise with distribution $\mathbf{z}_{k,b}\sim CN\left(\mathbf{0},\sigma^{2}\mathbf{I}\right)$. 

Assuming a linear receive beamformer $\mathbf{u}_{k,b}\in\mathbb{C}^{N_{k_{b}}\times1}$, the user in question then estimates the transmitted symbol $\hat{s}_{k,b}$ as
 \[\hat{s}_{k,b}=\mathbf{u}_{k,b}^{\dagger}\mathbf{y}_{k,b}\]

Assuming further that the symbols and receiver noise are statistically independent, the MSE is then given by (\ref{MSE}). 

Next, we define the weighted MSE (WMSE) function for user $k$ associated with BS $b$ as follows:
\begin{equation}\label{auxiliary_rate_wmmse}
\check{r}_{k,b}\left(\mathbf{W},\mathbf{U},\mathbf{V}\right)\triangleq w_{k,b}e_{kb}-\mathrm{log}\left(w_{k,b}\right)-1,
\end{equation}
where $e_{k,b}$ is the MSE given in (\ref{MSE}), and $w_{k,b}\in{\mathbb{R}_{++}}$ is an auxiliary optimization variable.
For notational convenience, we define the $K\times{B}$ weighted MSE matrix $\check{\mathbf{R}}\left(\mathbf{W},\mathbf{U},\mathbf{V}\right)$ to collect the WMSE values for the network under consideration, i.e.:
\[
\check{\mathbf{R}}\left(\mathbf{W},\mathbf{U},\mathbf{V}\right)_{k,b}=\check{r}_{k,b}\left(\mathbf{W},\mathbf{U},\mathbf{V}\right)
\]

Utilizing this weighted MSE matrix, we define the class of \textit{SGqP-WMSE minimization} problems below:
\begin{subequations}\label{SGqP_WMSE_problem}
	\begin{align}
	\underset{\mathbf{W},\mathbf{U},\mathbf{V}}{\mathrm{minimize}}\quad F_{K_{q}}\left(\mathrm{vec}\left(\check{\mathbf{R}}\left(\mathbf{W},\mathbf{U},\mathbf{V}\right)\right)\right)\hspace{4.70em}\label{SPR_obj_BF_transformed}\\
	\mathrm{subject\,to}\quad\sum_{k=1}^{K_{BS}}\left\Vert \mathbf{v}_{k,b}\right\Vert _{2}^{2}\leq P_{\mathrm{max}};\quad b=1,\ldots,B\label{sum_power_constraint_quad_BF}
	\end{align}
\end{subequations}
where, as in Part I, $F_{K_q}(\cdot)$ denotes the sum-greatest $q^\mathrm{th}$ percentile function, given by:
\[
F_{K_{q}}\left(\mathbf{x}\right)=\sum_{k=1}^{K_{q}}x_{i}^{\downarrow}
\]

We are now ready to state a crucial theorem that enables our optimization algorithm. 

\begin{theorem}\label{SGqP_WMSE_equivalence}
	The SGqP-WMSE minimization problem in (\ref{SGqP_WMSE_problem}) is equivalent to the SLqP-rate problem in (\ref{SPR_problem_BF}) in the sense that the optimal beamforming variable $\mathbf{V}$ and objective function values of both are identical.
\end{theorem}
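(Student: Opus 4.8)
The plan is to prove the equivalence by \emph{partially minimizing} the SGqP-WMSE objective over the two blocks of auxiliary variables $(\mathbf{W},\mathbf{U})$ while the beamforming matrix $\mathbf{V}$ is held fixed, showing that this partial minimization collapses the objective to exactly the negated SLqP-rate objective, and then observing that the two problems are posed over the identical feasible set for $\mathbf{V}$. First I would note that, for fixed $\mathbf{V}$, the weighted-MSE term $\check{r}_{k,b}(\mathbf{W},\mathbf{U},\mathbf{V}) = w_{k,b}e_{k,b}-\log w_{k,b}-1$ depends on the auxiliary variables only through its own pair $(w_{k,b},\mathbf{u}_{k,b})$; hence the joint minimization over $(\mathbf{W},\mathbf{U})$ decouples into $K$ independent scalar minimizations, one per user. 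For each $(k,b)$, minimizing the MSE in (\ref{MSE}) over the receive beamformer $\mathbf{u}_{k,b}$ gives the Wiener (MMSE) receiver (\ref{optimal_receive_BF}) with minimum value $e_{k,b}^{\mathrm{MMSE}}$, and the matrix inversion lemma identifies $e_{k,b}^{\mathrm{MMSE}} = (1+\mathrm{SINR}_{k,b})^{-1}$ with the SINR appearing in (\ref{data_rate_beamforming}), so that $-\log e_{k,b}^{\mathrm{MMSE}} = r_{k,b}(\mathbf{V})$. Minimizing the convex univariate function $w_{k,b}e_{k,b}^{\mathrm{MMSE}}-\log w_{k,b}-1$ over $w_{k,b}>0$ then yields the optimal weight (\ref{optimal_wmmse_variable}) and the optimal value $\log e_{k,b}^{\mathrm{MMSE}} = -r_{k,b}(\mathbf{V})$ (this is the one-auxiliary-variable reduction recorded in (\ref{MSE_one_variable})). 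The crucial observation is that all these per-user minimizers are attained \emph{simultaneously}: there is a single pair $(\mathbf{W}^{\star}(\mathbf{V}),\mathbf{U}^{\star}(\mathbf{V}))$ that drives \emph{every} entry of $\check{\mathbf{R}}(\cdot,\cdot,\mathbf{V})$ down to $-r_{k,b}(\mathbf{V})$.

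Next I would invoke the monotonicity of $F_{K_q}(\cdot)$ — it is non-decreasing in each of its arguments, a property of the SGqP function established in Part I — to interchange the partial minimization with the outer order-statistic functional. For any $(\mathbf{W},\mathbf{U})$, every entry of $\check{\mathbf{R}}(\mathbf{W},\mathbf{U},\mathbf{V})$ is at least the corresponding entry of $-\mathbf{R}(\mathbf{V})$ by the previous step, so $F_{K_q}\!\left(\mathrm{vec}(\check{\mathbf{R}}(\mathbf{W},\mathbf{U},\mathbf{V}))\right) \ge F_{K_q}\!\left(\mathrm{vec}(-\mathbf{R}(\mathbf{V}))\right)$, and the simultaneous minimizers from the first step attain this lower bound. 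Therefore
\[
\min_{\mathbf{W},\mathbf{U}}\;F_{K_q}\!\left(\mathrm{vec}\!\left(\check{\mathbf{R}}(\mathbf{W},\mathbf{U},\mathbf{V})\right)\right)\;=\;F_{K_q}\!\left(\mathrm{vec}\!\left(-\mathbf{R}(\mathbf{V})\right)\right)\;=\;-\,f_{K_q}\!\left(\mathrm{vec}\!\left(\mathbf{R}(\mathbf{V})\right)\right),
\]
where the last equality is the SGqP--SLqP duality $F_{K_q}(-\mathbf{x}) = -f_{K_q}(\mathbf{x})$ — an immediate sorting identity, since the $K_q$ largest components of $-\mathbf{x}$ are the negatives of the $K_q$ smallest components of $\mathbf{x}$. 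Because the per-BS power constraints on $\mathbf{V}$ in (\ref{SGqP_WMSE_problem}) are literally those of (\ref{SPR_problem_BF}), minimizing the far-left side over $\mathbf{V}$ equals $-\max_{\mathbf{V}} f_{K_q}\!\left(\mathrm{vec}(\mathbf{R}(\mathbf{V}))\right)$. Consequently a triple $(\mathbf{W}^{\star},\mathbf{U}^{\star},\mathbf{V}^{\star})$ solves the SGqP-WMSE problem if and only if $\mathbf{V}^{\star}$ solves the SLqP-rate problem and $(\mathbf{W}^{\star},\mathbf{U}^{\star})$ are its associated MMSE quantities (\ref{optimal_receive_BF})--(\ref{optimal_wmmse_variable}); in particular the optimal beamformer $\mathbf{V}^{\star}$ — and, up to the global sign fixed above, the optimal objective value — coincide, which is the asserted equivalence.

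The step I expect to require the most care — the genuine obstacle — is the interchange of the inner $\min_{\mathbf{W},\mathbf{U}}$ with the non-smooth, non-separable functional $F_{K_q}$. This is exactly where the combinatorial structure that defeats the classical WMMSE reduction (which needs the utility to be smooth, strictly concave and sum-decomposable) must be circumvented, and the argument goes through \emph{only} because two things hold at once: (i) $F_{K_q}$ is coordinatewise non-decreasing, and (ii) the auxiliary minimization decouples across users, so the componentwise lower bound $-\mathbf{R}(\mathbf{V})$ is simultaneously achievable by one $(\mathbf{W}^{\star},\mathbf{U}^{\star})$. If either property failed, the minimization could not be pulled outside $F_{K_q}$. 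A smaller point that should be stated explicitly is attainment: the minimum over $w_{k,b}>0$ is a genuine minimum, attained at the finite positive value (\ref{optimal_wmmse_variable}) because $0<e_{k,b}^{\mathrm{MMSE}}\le 1$, and the minimum over $\mathbf{u}_{k,b}$ is attained at the Wiener filter. The remaining ingredients — the MSE expansion (\ref{MSE}), the matrix-inversion-lemma identification of $e_{k,b}^{\mathrm{MMSE}}$ with the SINR in (\ref{data_rate_beamforming}), and the convexity of the scalar weight subproblem — are routine and I would not dwell on them.
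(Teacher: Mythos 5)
Your proof is correct and follows essentially the same route as the paper's: eliminate $\mathbf{U}$ and $\mathbf{W}$ via the per-user MMSE receiver and optimal-weight closed forms so that each WMSE entry collapses to $-r_{k,b}(\mathbf{V})$, then pass the minimization through $F_{K_q}$ and invoke the SGqP--SLqP symmetry $F_{K_q}(-\mathbf{x})=-f_{K_q}(\mathbf{x})$ over the common feasible set. If anything, you are more explicit than the paper on the one delicate step --- justifying $\min_{\mathbf{W},\mathbf{U}}F_{K_q}(\mathrm{vec}(\check{\mathbf{R}}))=F_{K_q}(\mathrm{vec}(-\mathbf{R}(\mathbf{V})))$ via coordinatewise monotonicity plus simultaneous attainability of the decoupled per-user minimizers --- which the paper asserts without comment (and you also get the sign of the partially minimized WMSE right, where the paper's intermediate expression is stated loosely).
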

\begin{proof}
	The equivalence can be proved as follows. First, we observe that $\check{r}_{k,b}\left(\mathbf{W},\mathbf{U},\mathbf{V}\right)$ is convex in $\mathbf{u}_{k,b}$ when $w_{k,b}$ is fixed and vice versa. Thus, setting the first-order condition of (\ref{MSE}) with respect to $\mathbf{u}_{k,b}$, we obtain the optimal (i.e., MMSE) receive beamformer as given in (\ref{optimal_receive_BF}). Substituting $\mathbf{u}_{k,b}^{\mathrm{opt}}$ back into (\ref{MSE}), it follows that the rate of the user in question can be equivalently expressed as in (\ref{MSE_one_variable}).
	Utilizing $\mathbf{u}_{k,b}^{\mathrm{opt}}$, we turn to the optimization of ${w}_{k,b}$. Considering the first-order condition of (\ref{MSE}) with respect to $w_{k,b}$, we obtain the optimal auxiliary variable as in (\ref{optimal_wmmse_variable}). Substituting $w_{k,b}^{\mathrm{opt}}$ back into (\ref{MSE_one_variable}), we recover the original rate expression in (\ref{data_rate_beamforming}) after some algebraic manipulation.
	
	From this, it follows that
	 \begin{multline}
	 \underset{\mathbf{W},\mathbf{U}}{\mathrm{min}}\quad F_{K_{q}}\left(\mathrm{vec}\left(\check{\mathbf{R}}\left(\mathbf{W},\mathbf{U},\mathbf{V}\right)\right)\right)=F_{K_{q}}\left(\mathrm{vec}\left(-\mathbf{R}\left(\mathbf{V}\right)\right)\right)\\
	 =F_{K_{q}}\left(-\mathrm{vec}\left(\mathbf{R}\left(\mathbf{V}\right)\right)\right)
	 \end{multline}
	
	Recall the symmetry property (i.e., Property 6) relating SLqP and SGqP functions introduced in Part I:
	\begin{equation}\label{SGqP_SLqP_symmetry}
	F_{K_{q}}\left(\mathbf{x}\right)=-f_{K_{q}}\left(\mathbf{-x}\right)
	\end{equation}
	
	Applying this property, it follows that 
	\[
	F_{K_{q}}\left(-\mathrm{vec}\left(\mathbf{R}\left(\mathbf{V}\right)\right)\right)=-f_{K_{q}}\left(\mathrm{vec}\left(\mathbf{R}\left(\mathbf{V}\right)\right)\right)
	\]
	
	The result then follows by noting that minimizing $-f_{K_{q}}\left(\mathrm{vec}\left(\mathbf{R}\left(\mathbf{V}\right)\right)\right)$ is equivalent to maximizing $f_{K_{q}}\left(\mathrm{vec}\left(\mathbf{R}\left(\mathbf{V}\right)\right)\right)$.
\end{proof}

\begin{remark}\label{sum_WMSE_max_WMSE_equivalence}
Problem (\ref{SGqP_WMSE_problem}) is equivalent to the sum-weighted MSE minimization problem when $K_q=1$ and the max-weighted MSE minimization problem when $K_q=K$.
\end{remark}
\begin{proof}
	This result follows straightforwardly from Property 3 of the SGqP utility function established in Part I.
\end{proof}
\subsubsection{Discussion} This equivalence between the SGqP-WMSE minimization and SLqP-rate maximization problems leads to a number of interesting consequences. The first, and most obvious one, is that Problem (\ref{SGqP_WMSE_problem}) subsumes the prior sum weighted-MSE and min-max WMSE problems introduced in \cite{shi_iteratively_2011} and \cite{razaviyayn_linear_2013} respectively as per Remark \ref{sum_WMSE_max_WMSE_equivalence}. This allows us to adopt a \textit{unified algorithmic approach} for optimizing SLqP rate problems rather than the ad hoc approach hitherto characteristic in dealing with special cases such as minimum-rate and sum-rate optimization. The key contribution of this algorithmic simplification is that it allows us to short-circuit the tedious and complexity-inducing steps advocated in prior works dealing with non-smooth problems such as hypograph reformulation (with a combinatorial number of constraints) \cite{razaviyayn_linear_2013,shen_fractional_2018-1} and successive convex approximation of non-convex constraints \cite{scutari_parallel_2017_1,scutari_parallel_2017}.

Significantly, as we shall explore later, this equivalence also allows us to adapt the WMSE transform to optimize the class of \textit{all} non-smooth utility functions that may not be strictly concave. This is in contrast to the original application of WMSE in \cite{shi_iteratively_2011} which, as mentioned earlier, could only be adapted to differentiable and strictly concave utility functions.

\begin{theorem}
	Problem (\ref{SGqP_WMSE_problem}) is convex in the transmit beamforming variables $\mathbf{V}$ when the receive beamforming variables $\mathbf{U}$ and weight variables $\mathbf{W}$ are fixed.
\end{theorem}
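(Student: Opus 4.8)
The plan is to establish convexity by combining three ingredients: (i) for fixed $\mathbf{U}$ and $\mathbf{W}$, each weighted-MSE entry $\check{r}_{k,b}(\mathbf{W},\mathbf{U},\mathbf{V})$ is a convex (in fact convex quadratic) function of $\mathbf{V}$; (ii) the sum-greatest-$q^\mathrm{th}$-percentile function $F_{K_q}(\cdot)$ is convex and non-decreasing in each of its arguments, as established in Part I; and (iii) the feasible set cut out by the per-BS power constraint is convex. Convexity of the objective in $\mathbf{V}$ then follows from the calculus of convex function composition \cite{boyd_convex_2004}, and convexity of the whole problem follows at once.

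First I would analyze the MSE $e_{k,b}$ in (\ref{MSE}) as a function of $\mathbf{V}$ alone, with $\mathbf{u}_{k,b}$ held fixed. The term $\left|1-\mathbf{u}_{k,b}^{\dagger}\mathbf{H}_{b\rightarrow k,b}\mathbf{v}_{k,b}\right|^{2}$ is the squared modulus of an affine map of $\mathbf{v}_{k,b}$, hence convex; each interference term equals $\left|\mathbf{u}_{k,b}^{\dagger}\mathbf{H}_{b'\rightarrow k,b}\mathbf{v}_{k',b'}\right|^{2}$, a squared modulus of a linear map of $\mathbf{v}_{k',b'}$, hence a Hermitian positive-semidefinite quadratic form; and the noise term $\sigma^{2}\mathbf{u}_{k,b}^{\dagger}\mathbf{u}_{k,b}$ is constant in $\mathbf{V}$. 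Summing, $e_{k,b}$ is a convex quadratic function of $\mathbf{V}$. Since $w_{k,b}\in\mathbb{R}_{++}$ is a fixed positive scalar and $-\mathrm{log}(w_{k,b})-1$ is constant in $\mathbf{V}$, the WMSE entry $\check{r}_{k,b}(\mathbf{W},\mathbf{U},\mathbf{V})=w_{k,b}e_{k,b}-\mathrm{log}(w_{k,b})-1$ from (\ref{auxiliary_rate_wmmse}) is convex in $\mathbf{V}$ for every $(k,b)$.

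Next I would invoke the properties of $F_{K_q}$ recalled from Part I: it is convex (being the sum of the $K_q$ largest coordinates, equivalently the pointwise maximum of the linear functionals $\mathbf{a}^{\top}\mathbf{x}$ over $\{\mathbf{a}:\mathbf{0}\le\mathbf{a}\le\mathbf{1},\ \mathbf{1}^{\top}\mathbf{a}=K_q\}$) and non-decreasing in each argument. Because $F_{K_q}$ is convex and coordinatewise non-decreasing, while each coordinate map $\mathbf{V}\mapsto\check{r}_{k,b}(\mathbf{W},\mathbf{U},\mathbf{V})$ is convex, the vector-composition rule gives that $F_{K_q}\bigl(\mathrm{vec}(\check{\mathbf{R}}(\mathbf{W},\mathbf{U},\mathbf{V}))\bigr)$ is convex in $\mathbf{V}$. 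Finally, for each $b$ the constraint $\sum_{k=1}^{K_{BS}}\left\Vert\mathbf{v}_{k,b}\right\Vert_{2}^{2}\le P_{\mathrm{max}}$ is a convex (second-order-cone) constraint, so the feasible set is convex and Problem (\ref{SGqP_WMSE_problem}), restricted to the variable $\mathbf{V}$, is a convex program.

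I do not anticipate a serious obstacle here; the single point that must not be glossed over is the monotonicity hypothesis in the composition rule — convexity of $F_{K_q}$ together with convexity of the inner maps would not by itself suffice, so I would explicitly cite the coordinatewise-monotonicity property of $F_{K_q}$ from Part I. A secondary point worth one sentence is that, unlike the power-control setting of Part I, the interference contributions here are genuinely quadratic in the beamformers, so one should note that the relevant matrices $\mathbf{H}_{b'\rightarrow k,b}^{\dagger}\mathbf{u}_{k,b}\mathbf{u}_{k,b}^{\dagger}\mathbf{H}_{b'\rightarrow k,b}$ are Hermitian positive semidefinite — which is immediate since they are outer products.
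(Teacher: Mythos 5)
Your proposal is correct and follows essentially the same route as the paper: show each weighted-MSE entry is convex (indeed convex quadratic) in $\mathbf{V}$ for fixed $\mathbf{U},\mathbf{W}$, invoke the convexity and coordinatewise monotonicity of $F_{K_q}$ from Part I, and conclude by the composition rule of \cite{boyd_convex_2004}. You simply supply more detail than the paper (the explicit quadratic-form analysis of $e_{k,b}$ and the convexity of the power constraint set), and your emphasis on the monotonicity hypothesis being essential is exactly the right point to flag.
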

\begin{proof}
	Observe that (\ref{auxiliary_rate_wmmse}) is a linear function of $e_{k,b}$, which in turn is a convex function of $\mathbf{V}$ when $\mathbf{U}$ and $\mathbf{W}$ are fixed.
	Next, we note that the SGqP utility function is convex and non-decreasing in each argument as per Property 1 and 2 established in Part I. The result then follows from the standard rules of convex composition \cite{boyd_convex_2004}.
\end{proof}

The block-convexity of the SGqP-WMSE objective allows us to derive the iterative optimization approach given in Algorithm \ref{WMSEAlg_BF} to maximize SLqP rates.
\begin{algorithm}[!h]
	\caption{SGqP-WMSE Minimization Algorithm for SLqP Rate Maximization via Beamforming}
	\label{WMSEAlg_BF}
	\begin{algorithmic}[1]
		\State \textbf{initialize} $\mathbf{V}$ such that it satisfies (\ref{sum_power_constraint_quad_BF}).
		\For{$i=1,\ldots$}
		\State \textbf{update} $\mathbf{U}$ using (\ref{optimal_receive_BF}) for fixed $\mathbf{V,W}$.
		\State \textbf{update} $\mathbf{W}$ by using (\ref{optimal_wmmse_variable}) for fixed $\mathbf{U,V}$.
	    \State \textbf{update} $\mathbf{V}$ for fixed $\mathbf{U,W}$ by solving (\ref{SGqP_WMSE_problem}).
		\EndFor
		\State \textbf{until} some convergence criterion is met.
	\end{algorithmic}
\end{algorithm}
\begin{remark}\label{SGqP_obj_convergence}
Algorithm \ref{WMSEAlg_BF} is guaranteed to be non-decreasing in the original objective (\ref{SPR_obj_BF}) after each iteration.
\end{remark}
\begin{proof}
The non-decreasing convergence follows from similar reasoning to that in Remark \ref{QFT_nondecreasing_convergence}.
\end{proof}
\begin{remark}
	Algorithm \ref{WMSEAlg_BF} is a cyclic minorization-maximization algorithm, and is guaranteed to converge to a directional stationary point of Problem (\ref{SPR_problem_BF}).
\end{remark}
\begin{proof}
	The proof involves verifying the conditions detailed in Appendix C of Part I. To simplify the analysis, we consider maximization of the negative SGqP-WMSE objective; further, we denote the feasible set for problem (\ref{SPR_problem_BF}) as $\mathcal{V}$. We begin by noting that both the optimal receive beamformer and weight variable, given by (\ref{optimal_receive_BF}) and (\ref{optimal_wmmse_variable}) respectively, are functions of the beamforming variables $\mathbf{V}$ and thus subsequently denote them as $\mathbf{U}^{\mathrm{opt}}\left(\mathbf{V}\right)$ and $\mathbf{W}^{\mathrm{opt}}\left(\mathbf{V}\right)$ respectively.
	
	It follows that for all $\mathbf{V}\left[i\right]\in\mathcal{V}$, we have
	\begin{multline}
	-F_{K_{q}}\left(\mathrm{vec}\left(\check{\mathbf{R}}\left(\mathbf{W}^{\mathrm{opt}}\left(\mathbf{V}\left[i\right]\right),\mathbf{U}^{\mathrm{opt}}\left(\mathbf{V}\left[i\right]\right),\mathbf{V}\right)\right)\right)\\
	\leq-F_{K_{q}}\left(\mathrm{vec}\left(\check{\mathbf{R}}\left(\mathbf{W}^{\mathrm{opt}}\left(\mathbf{V}\right),\mathbf{U}^{\mathrm{opt}}\left(\mathbf{V}\right),\mathbf{V}\right)\right)\right)\hspace{6.200em}\\
	=f_{K_{q}}\left(\mathrm{vec}\left(\mathbf{R}\left(\mathbf{V}\right)\right)\right)\hspace{11.90em}
	\end{multline}
	 where the inequality follows since the choice of auxiliary variables $\mathbf{U}^\mathrm{opt}\left(\mathbf{V}\right)$ and $\mathbf{W}^\mathrm{opt}\left(\mathbf{V}\right)$ for a given value of beamforming variables $\mathbf{V}$ minimizes the SGqP-WMSE objective. The equality follows from similar reasoning to Theorem \ref{SGqP_WMSE_equivalence} and the application of the aforementioned symmetry property of the SLqP and SGqP functions. Hence, the negative SGqP-WMSE objective minorizes the original SLqP-rate objective in (\ref{SPR_obj_BF}), and Algorithm (\ref{WMSEAlg_BF}) is indeed an MM algorithm.
	 The proof of convergence to a directional stationary point follows from similar reasoning to Theorem 4 in Part I.
\end{proof}	

	Crucially, the identification of Algorithm \ref{WMSEAlg_BF} as belonging to the minorization-maximization framework greatly simplifies the analysis. The WMMSE algorithm in \cite{shi_iteratively_2011} analyzed the stationary points of the sum-WMSE minimization problem to show that they were equivalent to the stationary points of the beamforming sum-rate maximization problem. On the other hand, the work in \cite{razaviyayn_linear_2013} analyzed the dual of the epigraph form of the min-max WMSE problem, and relied upon identification of the inactive dual variables. This was then used in order to prove a one-to-one correspondence between the stationary points of the min-max WMSE problem and those of the original max-min-rate optimization problem. The direct application of this specific approach is not only extremely tedious for general SLqP-rate maximization problems (owing to the combinatorial number of non-convex constraints when the minimization problem is expressed in epigraph form). Thus, the proposed approach of characterizing the WMSE transform as an MM approach greatly reduces the complexity of analyzing the convergence properties of Algorithm \ref{WMSEAlg_BF}.
	\begin{remark}
	The WMSE transform can also be utilized for the SLqP rate maximization power control problems considered in Part I.
	\end{remark}
	A key point to consider here is that the WMSE transform introduces two auxiliary blocks of variables (i.e., $\mathbf{U}$ and $\mathbf{W}$); in contrast, the QFT and LFT transforms introduce just one additional block of variables (i.e., $\mathbf{X}$). When utilized for power control, therefore, the QFT and LFT transforms are more computationally efficient in the sense that they necessitate the update of two blocks of variables rather than the three associated with the WMSE transform.
	
\vspace{-1.2em}
\subsection{Simulation Results}
We simulate the performance of Algorithm \ref{WMSEAlg_BF} in an identical environment to that utilized for the QFT beamforming algorithm. There are $5$ users per cell, leading to a total of $35$ users in the network, and we optimize the sum of the smallest $K_q=2$ (corresponding to the $5.7^{\mathrm{th}}$ percentile) rates in the entire network. As with the QFT algorithm, we initialize the beamforming weights with an equal-power matched-filtering solution.

Figure \ref{SGqP_WMSE_BF_convergence} illustrates the convergence of the auxiliary SGqP-WMSE objective in (\ref{SGqP_WMSE_problem}) for a random set of channel realizations (chosen identically to those utilized for Figure \ref{QFT_BF_convergence}), while Figure \ref{SGqP_WMSE_BF_SLqP_rate_convergence} plots the convergence of the corresponding SLqP rate objective. As expected from Remark \ref{SGqP_obj_convergence}, the objective is non-decreasing in each iteration. The final SLqP utility achieved is nearly double the initial value. In comparison to the MQFT approach, the SGqP-WMSE approach achieves a slightly lower SLqP utility.

\begin{figure}
	\begin{subfigure}[b]{0.49\columnwidth}
		\includegraphics[trim={0cm 0cm 0cm 0cm},clip,width=1.0\columnwidth]{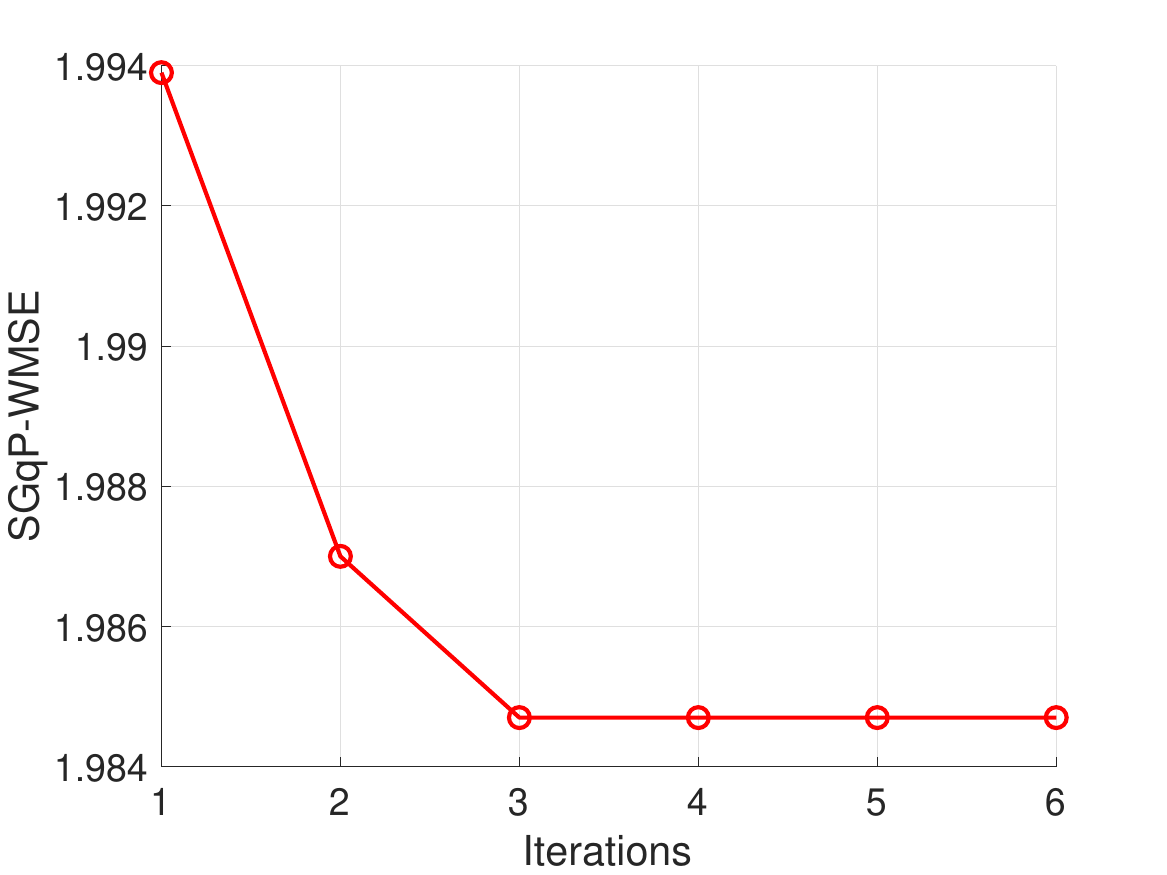}
		\caption{Convergence of SGqP-WMSE.}
		\label{SGqP_WMSE_BF_convergence}
	\end{subfigure}
	\hfill 
	\begin{subfigure}[b]{0.49\columnwidth}
		\includegraphics[trim={0cm 0cm 0cm 0cm},clip,width=1.0\columnwidth]{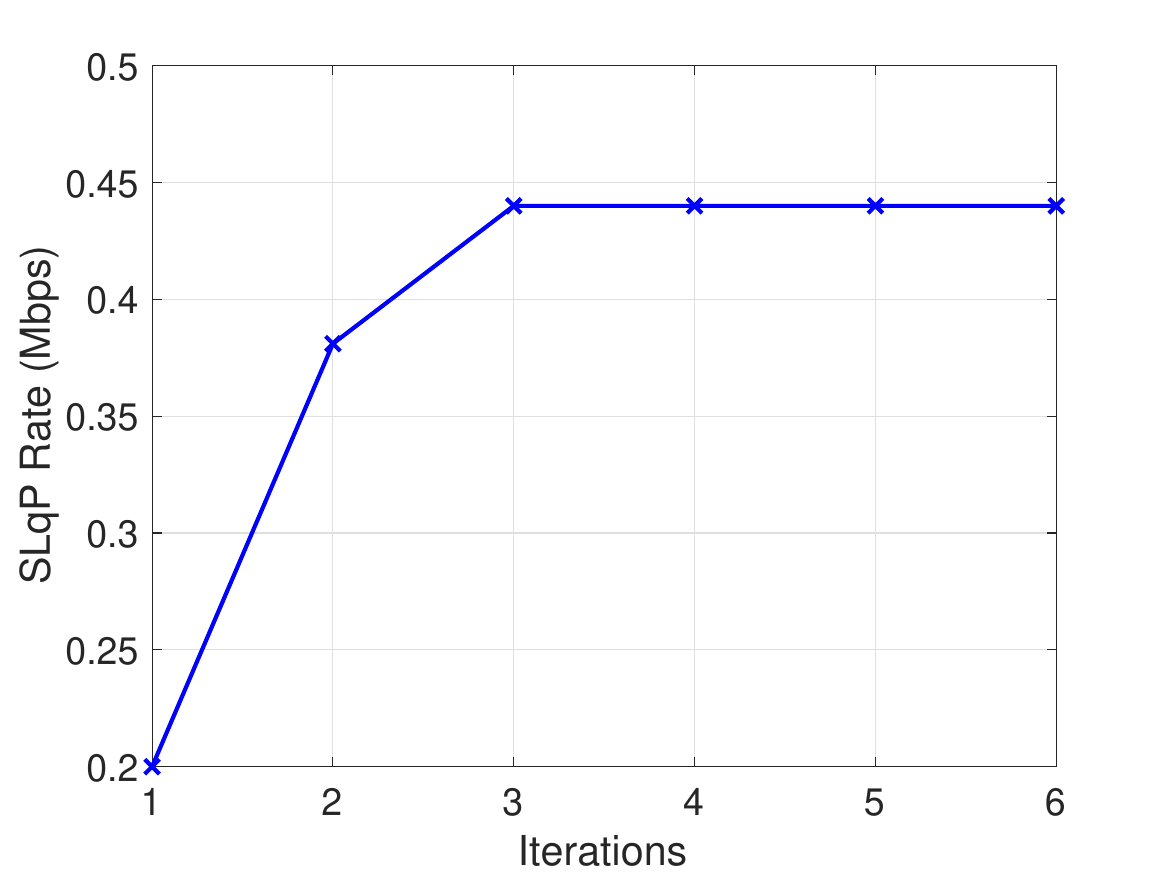}
		\caption{Convergence of SLqP rate utility.}
		\label{SGqP_WMSE_BF_SLqP_rate_convergence}
	\end{subfigure}
	\caption{Simultaneous convergence of SGqP-WMSE and SLqP-rate for $K=35$, $K_{q}=2$.}
\end{figure}

Finally, we consider an example with $K=14$ single-antenna users, $K_q=2$, and a base station equipped with $M=4$ transmit antennas, and compare against the following beamforming schemes from the prior literature:

\begin{itemize}
	\item Sequential Quadratic Programming (SQP): The beamforming variant of this approach is described in \cite{khan_optimizing_2020}; in essence, it iterates between finding the second-order approximation of the problem and solving the resulting approximated problem (which is a quadratic program). This algorithm has shown good performance in a variety of beamforming problems \cite{nguyen2017joint}. For our implementation, we utilize the built-in version of this algorithm in the fmincon package in MATLAB.
	
	\item Channel Weighted Sum-Rate (CWSR): Similar to the power control scheme, we also compare against the heuristic of solving a weighted sum-rate problem with the weights chosen inversely proportional to the channel strength. This should favour cell-edge users since they naturally have weaker information-bearing channels from their serving base stations. However, it cannot account for the cross-channel interference, and does not directly optimize the SLqP rate objective. We make use of the WMMSE algorithm \cite{shi_iteratively_2011}, which, as mentioned earlier, is guaranteed to reach a directional stationary point of the WSR objective.
	
	\item Zero-Forcing with Nulling (ZF-N): This approach is based on the beamforming scheme developed in \cite{hosseini_optimizing_2018}. In this scheme, the number of transmit antennas at the BS is assumed to be larger than the number of single-antenna users per cell. Accordingly, the BS can choose to distribute its spatial degrees of freedom (equal to the number of transmit antennas) in two different ways. It can use all of the degrees to design orthogonal beamforming weights for the users within its cell. Alternatively, it can use some (or all) of the excess degrees of freedom to design its beamforming weights so that they do not create interference for users being served in other cells. 
	
	This scheme has been shown to be effective in improving cell-edge rates; in particular, the $10^\mathrm{th}$-percentile rate is empirically shown to improve when the BS uses most of the excess spatial degrees of freedom to create nulls at these out-of-cell users. Accordingly, since each BS has $4$ degrees of freedom, and there are $2$ users per cell, we utilize the excess degrees of freedom to create nulls at the two out-of-cell users with the strongest channel strength.
\end{itemize}

The results are illustrated in Figure \ref{new_BF}. We observe that both the SGqP-WMSE and QFT algorithms converge smoothly, starting from the same initialization, to reach a directional stationary point. Similar to the power control setting, the CWSR scheme does not display monotonic convergence; this is to be expected since we are optimizing a different objective function (i.e., weighted sum-rate) rather than the actual one (i.e., SLqP rate). Furthermore, the SQP approach demonstrates poor performance, with almost no improvement compared to the initialization. In contrast, the ZF-N approach is better, but is still outperformed by our proposed solutions. 

We also note that the computation time involved with the beamforming schemes is, naturally, substantially higher as compared to the power control strategies.

\begin{figure}[t!]
	\begin{center} 
		\includegraphics[trim={0cm 0cm 0cm 0cm},clip, width=0.4\textwidth]{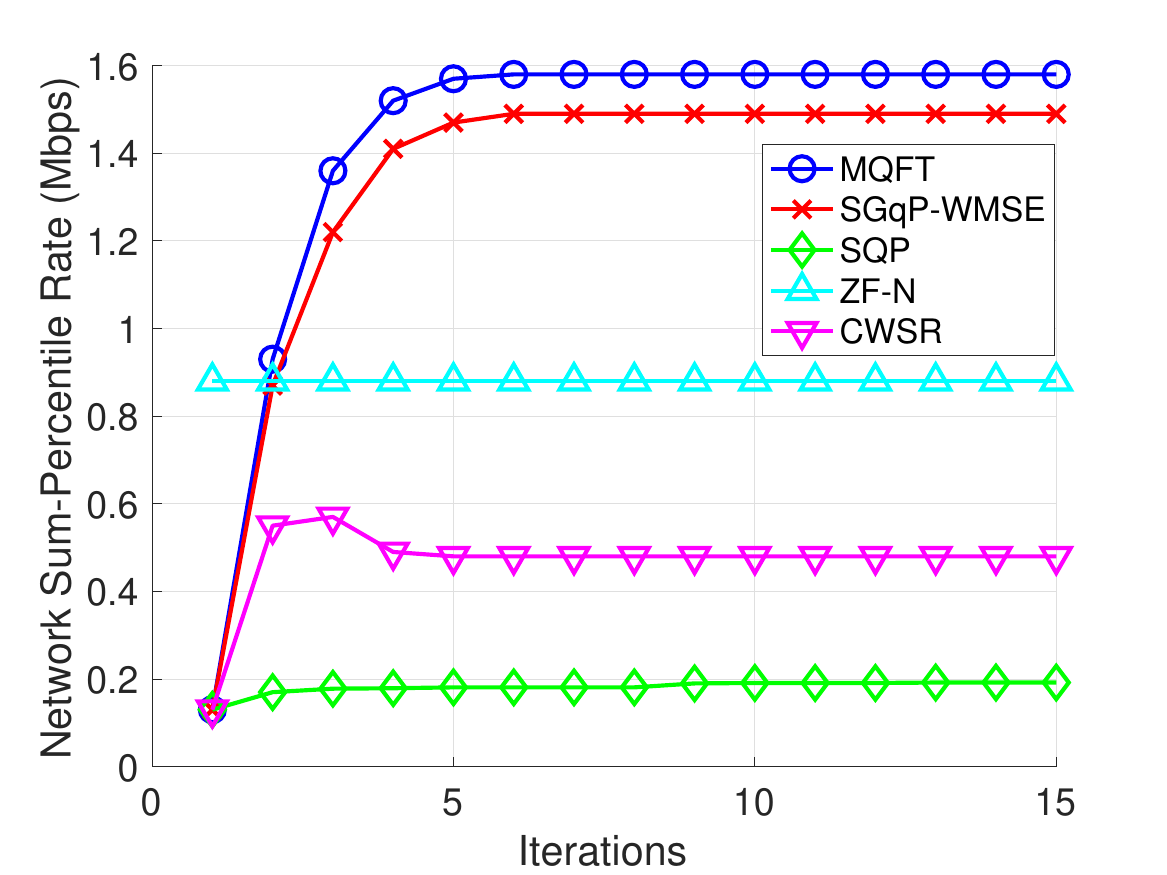}
		\caption{Convergence of SLqP utility for $K=14$ and $K_q=2$.}
		\centering
		\label{new_BF}
	\end{center}
\end{figure}

\section{Hybrid Utility Functions}\label{hybrids}

The key point to recognize is that the SLqP utility functions are not unique in their ability to be optimized in this fashion. In fact, the developed algorithms can be extended to \textit{any} concave utility function that is non-decreasing in each argument. This greatly expands the scope of the rate-based resource allocation problems we can solve and also allows us to re-interpret classical problems in the literature from a percentile perspective in order to derive effective solutions. Specifically, we can optimize new \textit{hybrid} utility functions, the construction of which we detail in the next section.

\subsection{Hybrid Utility Function Construction}\label{hybrid_construction}
Below, we list some of the most commonly optimized utility functions in wireless resource allocation; we refer to these as atomic utility functions since they form the basis from which we construct more complex utility functions later. We denote the rate achieved by user $k$ as $r_{k}$, dropping the notation indicating power or beamforming variables for clarity.
\begin{itemize}
	\item \textbf{SLqP utility functions:} $f_{K_{q}}\left(r_{1},\ldots,r_{K}\right)=\sum_{k=1}^{K_{q}}r_{k}^{\uparrow}$. As we have mentioned earlier, this family contains as special cases the max-min and sum-utility functions when $K_q=1$ and $K_q=K$ respectively.
	\item \textbf{Proportional fair utility:} $\mathrm{PF}\left(r_{1},\ldots,r_{K}\right)=\sum_{k=1}^{K}\mathrm{log}\left(r_{k}\right)$. As a consequence of the monotonicity of the logarithm function, this can also be viewed as maximization of the geometric mean of the rates, i.e.,:
	\small
	\[
	\mathrm{GM}\left(r_{1},\ldots,r_{K}\right)=\left(\prod_{k=1}^{K}r_{k}\right)^{1/K}
	\]
	\normalsize
	\item $\boldsymbol{\alpha}$-\textbf{fair utility}: For $\alpha>{0}$, $\alpha\neq{1}$, this family of utility functions is given by:
	\small
	\begin{equation}
	U_{\alpha}\left(r_{1},\ldots,r_{K}\right)=\sum_{k=1}^{K}\frac{1}{1-\alpha}r_{k}^{1-\alpha}
	\end{equation}
\normalsize
	When $\alpha=1$, this simply reduces to the proportional fair utility function; for $\alpha=0$ it is equivalent to the sum-utility function. Finally, in the limit as $\alpha\rightarrow\infty$, it is equivalent to the max-min utility function.
	\item \textbf{Harmonic mean:}  $\mathrm{HM}\left(r_{1},\ldots,r_{K}\right)=\frac{K}{\sum_{k=1}^{K}\frac{1}{r_{k}}}$.
\end{itemize}

Each of these utility functions is concave \cite{zhi-quan_luo_dynamic_2008} and non-decreasing in each argument. It follows that any utility function constructed through a concavity-preserving operation using any of these atomic utility functions with each other is therefore concave as well; in this paper, we focus on non-negative weighted sums and composition to construct \textit{hybrid} utility functions. Consequently, the QFT, LFT and SGqP-WMSE algorithms can all be directly utilized \textit{without modification} to optimize any such hybrid utility function; since the composition of concave non-decreasing functions is also concave non-decreasing, it follows that the update of the auxiliary variables would be unchanged from the setting of pure SLqP utility maximization.

To illustrate how these hybrid utility functions can be constructed and optimized, we consider three examples. For the sake of notational simplicity, we focus on the power control setting; however, the algorithms can be straightforwardly extended to the multiantenna beamforming system model using the approaches described in the previous section.
\vspace{-1.00em}
\subsection{Weighted Sum of SLqP Utilities}
So far, we have focused on pure SLqP utility optimization by targeting rate optimization at lower percentiles. In practice, this may be undesirable for operators who might wish to strike a balance between network throughput (which favours cell-centre users) and cell-edge rate. 

This balance can be achieved by optimizing a weighted sum of network throughput and $25^\mathrm{th}$-percentile rate as given by the problem below:
\begin{subequations}\label{SPR_problem_shortterm}
	\begin{align}
	\underset{\mathbf{p^{\mathit{}}}}{\mathrm{maximize}}\quad f_{K}\left(\mathbf{r}\left(\mathbf{p}\right)\right)+wf_{K/4}\left(\mathbf{r}\left(\mathbf{p}\right)\right)\hspace{5.05em}\label{SPR_obj_shortterm}\\
	\mathrm{subject\,to}\quad0\leq p_{k}^{}\leq P_{\mathrm{max}},\,k=1,\ldots,K\label{SPR_constraint_shortterm}\hspace{3.70em}
	\end{align}
\end{subequations}
where $w>0$ is a weight parameter controlling the tradeoff between sum-rate and $25^\mathrm{th}$-percentile SLqP rate. Clearly, when $w=0$, we are dealing with a pure sum-rate optimization problem and greater values of $w$ emphasize cell-edge rate. It should be emphasized that prior works have not considered solutions to these hybrid utility function optimization problems within the context of optimization theory. The work in \cite{naderializadeh_resource_2021} considers a deep reinforcement learning approach to optimize a similar hybrid function consisting of a weighted sum of network sum-rate and point-$5^\mathrm{th}$-percentile rate; however, the deep learning model needs to be retrained when the value of the weight parameter changes. In contrast, the proposed QFT/LFT algorithms can be straightforwardly applied and are \textit{guaranteed} to converge to directional stationary points regardless of the value of $w$ and do not need to be modified.

We plot the achieved sum-rate and $25^\mathrm{th}$-percentile SLqP rate for the chosen values of $w$ in Figure \ref{tradeoff}; the channel model and simulation settings are chosen identically to the $7$-cell network from Part I, with $K=56$ users in the network. As with Part I, and for all subsequent results in this paper, we initialize the power variables by choosing uniformly randomly in the interval $\left[0,P_\mathrm{max}\right]$.  For $w=0$ (i.e., the sum-rate), the cell-edge rate is $0$, as expected. As $w$ is increased to $10$, the cell-edge rate improves to approximately $6.1$ Mbps; for $w=100$, the cell-edge rate increases to $9.4$ Mbps. At the same time, the network sum-rate decreases substantially, from around $415$ Mbps when $w=0$ to $60$ Mbps when $w=100$. These results illustrate the fundamental tradeoff between favouring cell-edge and cell-centre users.

\begin{figure}[t!]
	\begin{center} 
		\includegraphics[trim={0cm 0cm 0cm 0cm},clip,width=0.4\textwidth]{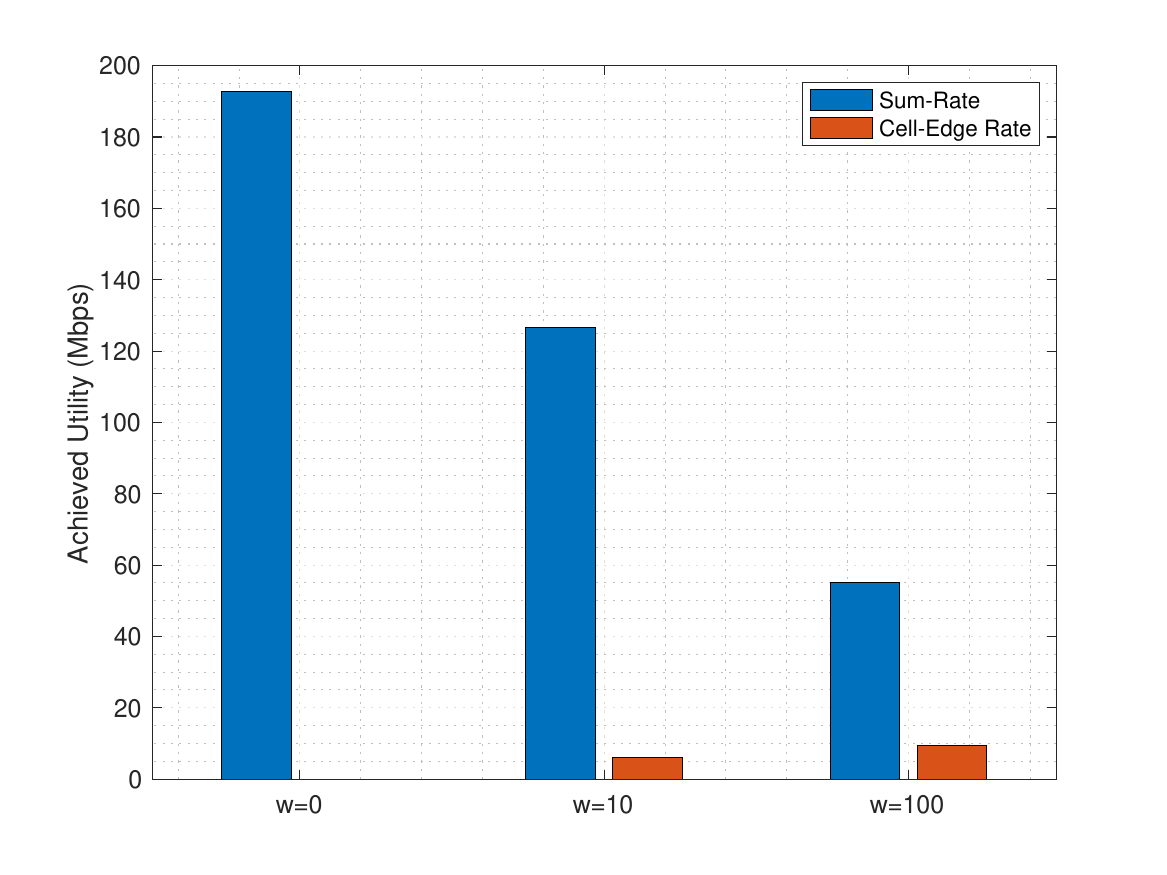}
		\caption{Tradeoff between sum-rate and cell-edge rate for different values of $w$.}
		\centering
		\label{tradeoff}
	\end{center}
\end{figure} 
\vspace{-1.00em}
\subsection{Per-Cell Percentile Utilities}

The problems we have considered so far involve optimization of \textit{network} SLqP utilities. However, it may be desirable to focus on some combination of \textit{per-cell} SLqP utilities, particularly in networks where a given cell may have a disproportionate share of underperforming users.

To elaborate, consider a network with $B$ cells, with BS $b$ serving $K_b$ users. The rates vector for the $b^\mathrm{th}$ cell is given by:
\[
\mathbf{r}_{b}\left(\mathbf{p}\right)=\left[r_{1,b}\left(\mathbf{p}\right),\ldots,r_{K_{b},b}\left(\mathbf{p}\right)\right]^{T}
\]

Note that unlike the scenarios we have considered so far, $K_b$ may differ from cell to cell. Thus, it may be the case that a given cell has a much larger number of users, leading to the network $10^\mathrm{th}$-percentile SLqP rate being largely controlled by users within this cell. Accordingly, BSs in all other cells may have to adjust their transmissions to avoid creating interference to users within the cell in question. This may once again be undesirable from an operator perspective.

A fairer approach can be devised as follows. Instead of optimizing the \textit{network} $10^\mathrm{th}$-percentile SLqP rate, we may choose to optimize the following functions of the \textit{per-cell} $10^\mathrm{th}$-percentile SLqP rates
\begin{itemize}
	\item {Arithmetic mean:} $\frac{1}{B}\sum_{b=1}^{B}f_{K_{b}/10}\left(\mathbf{r}_{b}\left(\mathbf{p}\right)\right)$.
	\item {{Geometric mean:}} $\left(\prod_{b=1}^{B}f_{K_{b}/10}\left(\mathbf{r}_{b}\left(\mathbf{p}\right)\right)\right)^{\frac{1}{B}}$.
	\item {Minimum of per-cell $10^\mathrm{th}$-percentile SLqP rate}: $\underset{b=1,\ldots,B}{\mathrm{min}}\quad f_{K_{b}/10}\left(\mathbf{r}_{b}\left(\mathbf{p}\right)\right)$.
\end{itemize}
subject to a per-user maximum power constraint identical to (\ref{SPR_constraint_shortterm}).
In contrast to the network $10^\mathrm{th}$-percentile SLqP rate, these hybrid utility functions ensure that every cell contributes to the overall objective.

Once again, each of these utility functions is concave overall following the discussion in Section \ref{hybrid_construction}. In particular, the minimum of per-cell $10^\mathrm{th}$-percentile SLqP rates can be viewed as the composition of the minimum-percentile SLqP utility function $f_{1/B}\left(\cdot\right)$ (which is concave and non-decreasing in each argument) with the per-cell $10^\mathrm{th}$-percentile SLqP utilities $f_{K_{b}/10}\left(\cdot\right)$ which are also concave, hence leading to overall concavity of the utility function.

Figure \ref{per_cell} plots the convergence of the QFT algorithm for each of these utility functions in a $7$-cell network with $K=140$ users. We note that the final values of the objective reached are ordered with the minimum per-cell $10^\mathrm{th}$-percentile rate smaller than the geometric mean which is in turn smaller than the geometric mean. This is to be expected in general according to the inequalities relating these general utilities established in \cite{zhi-quan_luo_dynamic_2008}.

We also consider these hybrid utilities for the $5^\mathrm{th}$-percentile utility; the results are illustrated in Figure \ref{5th_percentile_convergence}. The convergence and ordering follow a similar trend to the $10^\mathrm{th}$-percentile utilities given in Figure \ref{per_cell} (with the obvious difference that every $5^\mathrm{th}$-percentile utility function has a smaller value than the corresponding $10^\mathrm{th}$-percentile utility; this is to be expected from Remark 2 in Part I).

\begin{figure}
	\begin{subfigure}[b]{0.49\columnwidth}
		\includegraphics[trim={0cm 0cm 0cm 0cm},clip,width=1.0\textwidth]{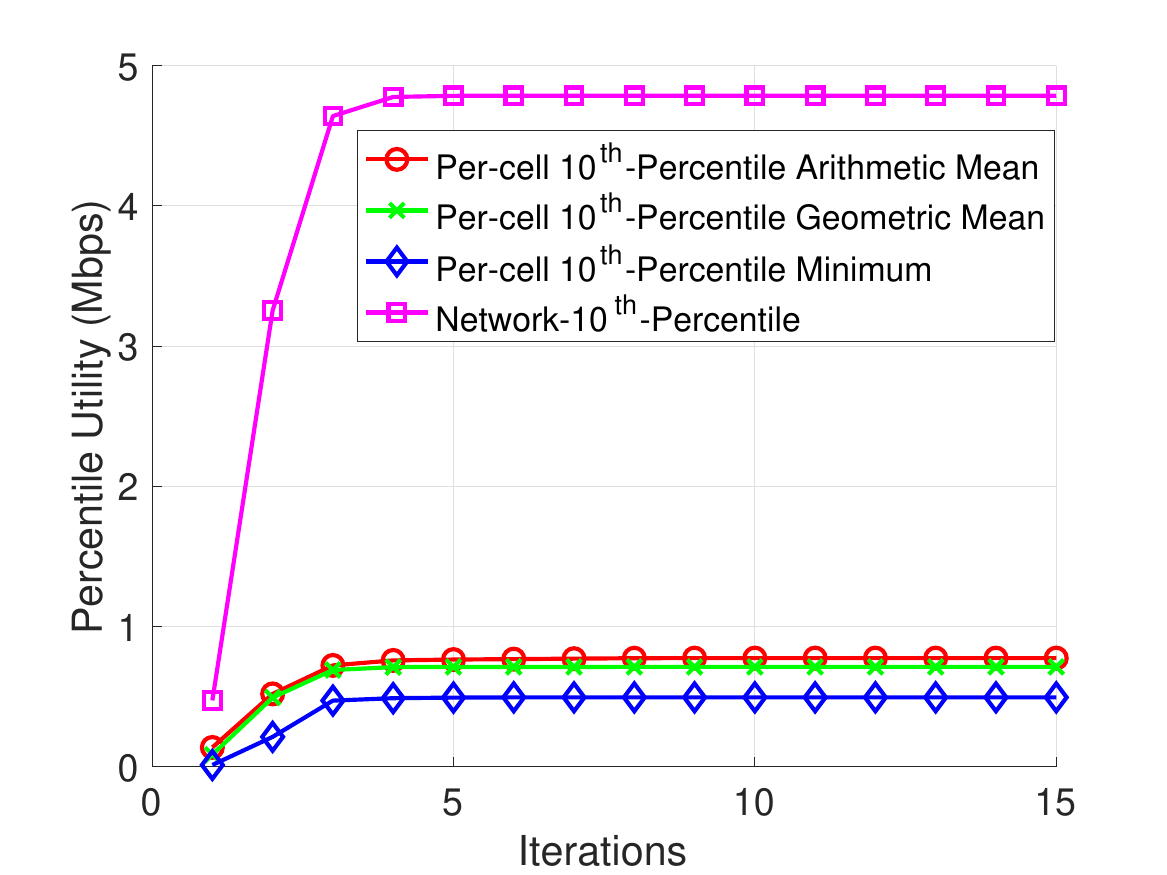}
		\caption{Hybrid $10^\mathrm{th}$-percentile objective convergence.}
	\centering
		\label{per_cell}
	\end{subfigure}
	\hfill 
	\begin{subfigure}[b]{0.49\columnwidth}
		\includegraphics[trim={0cm 0cm 0cm 0cm},clip, width=1.0\textwidth]{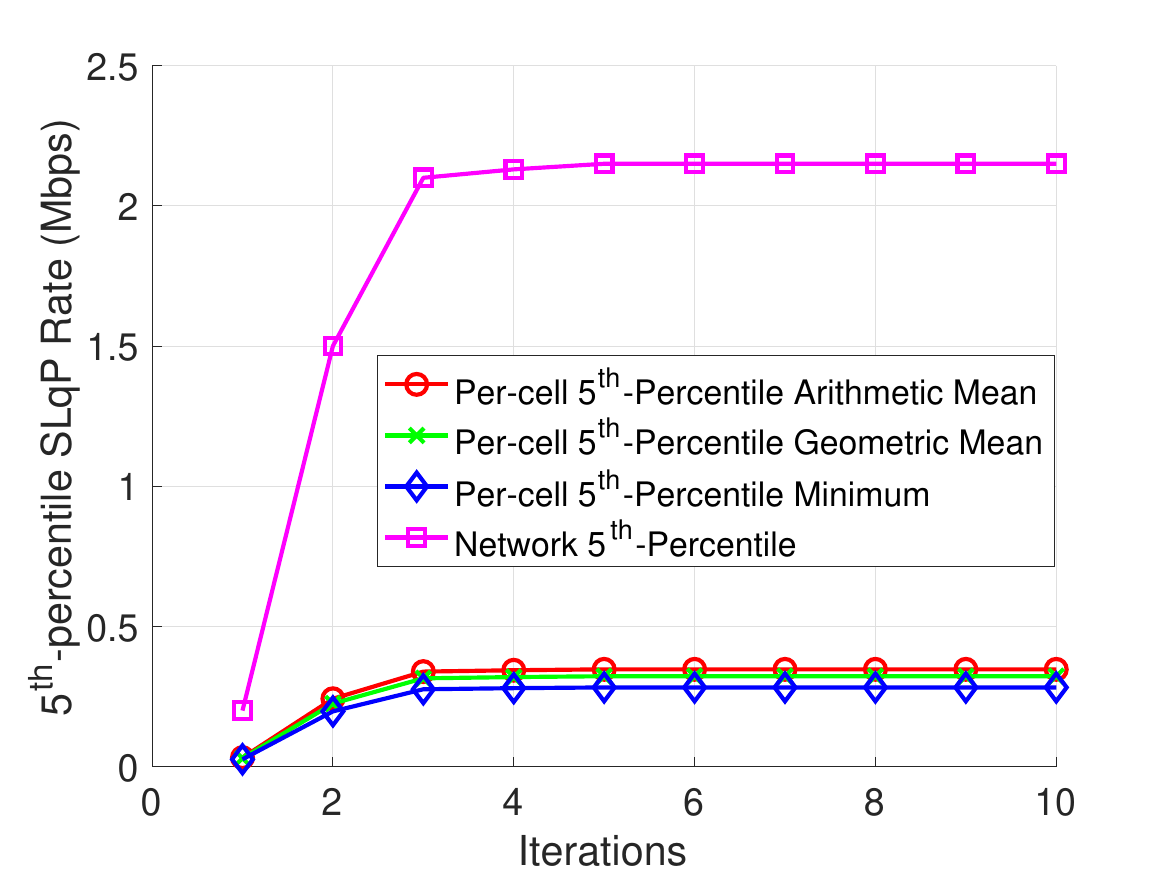}
		\caption{Hybrid $5^\mathrm{th}$-percentile objective convergence.}
	\centering
		\label{5th_percentile_convergence}
	\end{subfigure}
\caption{Convergence of QFT algorithm for hybrid $10^\mathrm{th}$-percentile and $5^\mathrm{th}$-percentile utilities}
\end{figure}
\vspace{-0.7em}
\subsection{Max-Min-Rate Across Multiple Frequency Bands} 
The max-min-rate problem can also be extended to the setting with multiple frequency bands. In this scenario, we wish to optimize the minimum (over all users) achieved sum-rate across multiple frequency bands. Specifically, consider the downlink of a multiuser SISO network with $K$ transmitter-receiver pairs and a total of $F$ orthogonal frequency bands. Denoting the channel for user $k$ on the $f^\mathrm{th}$ frequency band as $h_{f,k\rightarrow k}$ and the transmit power for the user and frequency band in question as $p_{f,k}$, it follows that the combined rate of user $k$ across all $F$ frequency bands is given by
\begin{equation}
R_{k}\left(\mathbf{P}\right)=\sum_{f=1}^{F}\mathrm{log}\left(1+\frac{p_{f,k}\left|h_{f,k\rightarrow k}\right|^{2}}{\sum_{j\neq k}p_{f,j}\left|h_{f,j\rightarrow k}\right|^{2}+\sigma_{f}^{2}}\right)
\end{equation}
where we collect the power variables for all frequency bands in $\mathbf{P}$ to avoid notational clutter, and $\sigma_{f}^{2}$ indicates the receiver noise power on frequency band $f$.

The overall optimization problem that describes our desired aim is then given by:
\small
\begin{subequations}\label{multi_freq_max_min}
	\begin{align}
	\underset{\mathbf{P^{\mathit{}}}}{\mathrm{maximize}}\quad\underset{k=1,\ldots,K}{\mathrm{min}}\quad R_{k}\left(\mathbf{P}\right)\hspace{6.95em}\label{mult_freq_obj}\\
	\mathrm{subject\,to}\quad p_{f,k}\geq0,\,\begin{array}{c}
	k=1,\ldots,K;\\
	f=1,\ldots,F
	\end{array}\label{mult_freq_constraint_1}\hspace{3.30em}\\
	\sum_{f=1}^{F}p_{f,k}\leq P_{\mathrm{max}},\,k=1,\ldots,K\label{mult_freq_constraint_2}\hspace{4.10em}
	\end{align}
\end{subequations}
\normalsize

Problem (\ref{multi_freq_max_min}) has been extensively studied in the literature and inherits the undesirable properties of both max-min utility maximization (i.e., non-smoothness) and sum-rate (i.e., non-convexity). Furthermore, it is known to be strongly NP-hard in the number of users $K$ \cite{zhi-quan_luo_dynamic_2008}. 

While smoothing this problem would result in $K$ non-convex constraints (which is manageable compared to general SLqP utility optimization problems as we have discussed), the strategies that we have developed allow us to simply bypass these intermediate steps. Observe that the ${\mathrm{min}_{{k=1,\ldots,K}}}\left(\cdot\right)$ operator can be viewed as the minimum percentile function $f_{1/K}(\cdot)$. Next, we recognize that the inner term $R_{k}\left(\mathbf{P}\right)$ is, in fact, a sum-utility function (and hence an SLqP utility). Since we are composing two concave utilities which are non-decreasing in each component, it follows that the overall utility is also concave. Hence the QFT and LFT algorithms can be directly applied to optimize the overall utility function to directional stationarity.

We plot the network utility achieved using the QFT algorithm as a function of transmit power in Figure \ref{mf_max_min}. These results were obtained by averaging over 1,000 realizations for each power level, with all algorithms initialized with the same uniform random power solution for fairness. We compare the QFT algorithm against the equal power allocation scheme (in which the total power for each user is equally divided among the $F=3$ orthogonal bands), and a heuristic of optimizing the sum-rate across all frequency bands and users while maintaining the power constraint in (\ref{mult_freq_constraint_2}). Additionally, we consider the following random power control schemes:

\begin{itemize}
	\item Uniform: For each user, we choose a transmission power for each of the three frequency bands uniformly randomly in the interval $\left[0,P_\mathrm{max}/3\right]$. This approach is standard and identical to that adopted by several prior resource allocation works for single-band optimization including \cite{nasir_multi-agent_2019,shen_fractional_2018}.
	\item Rayleigh: For each user, the power in each band is initialized from a Rayleigh distribution with parameter $P_\mathrm{max}\sqrt{1/2\pi}$, giving a mean value of $P_\mathrm{max}/2$.
	\item Exponential: For each user, the power in each band is initialized from an exponential distribution with parameter 1.
\end{itemize}

We observe that the sum-rate heuristic has extremely poor performance and is, in fact, outperformed by the equal-power as well as \textit{all} random power control schemes. This is unsurprising, since, as discussed earlier in Part I, sum-rate optimization is unfair to cell-edge users. Among the random power control schemes, the uniform random power control performs the best, although all three distributions achieve similar results. Nonetheless, the proposed QFT algorithm still achieves the best performance in terms of minimum rate.
\vspace{-0.30em}
\begin{figure}[!t]
	\begin{center} 
		\includegraphics[trim={0cm 0cm 0cm 0cm},clip, width=0.40\textwidth]{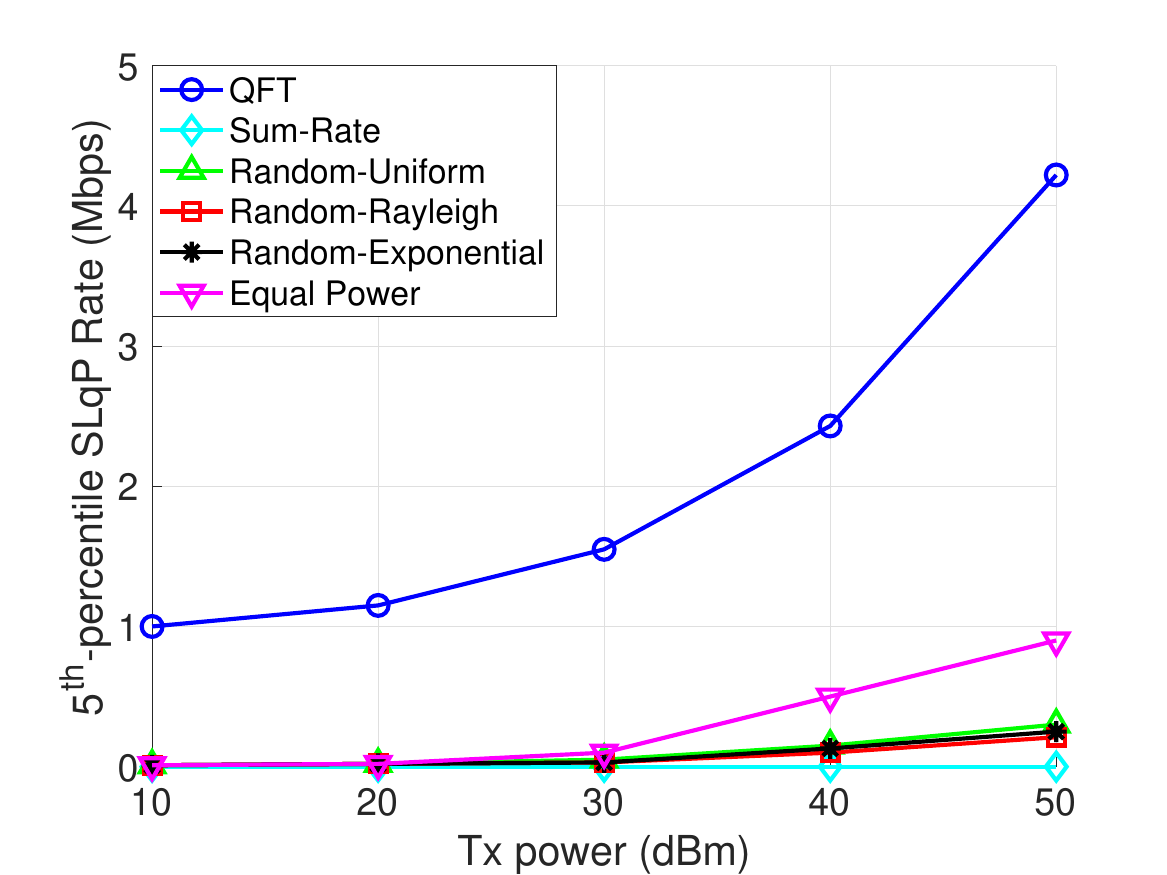}
		\caption{Minimum network sum-rate across $F=3$ frequency bands for a network with $K=56$ users as a function of $P_\mathrm{max}$.}
		\centering
		\label{mf_max_min}
	\end{center}
\end{figure}
\subsection{Proportional Fair SLqP Utility}
So far, we have focused on optimizing SLqP utility which corresponds to the sum of the smallest $K_q$ rates in the network. However, as with sum-rate, it is possible for certain users to be assigned zero powers, particularly if their channel conditions are poor. This is particularly undesirable in the context of improving service for cell-edge users. Furthermore, it raises the issue of how fairness can be guaranteed for the users experiencing the worst channel conditions.

One approach to ensure fairness is to choose to optimize the proportional fair SLqP rather than the sum-rate, as given by the problem formulation below:

\small
\begin{subequations}\label{SPR_problem_PF}
	\begin{align}
		\underset{\mathbf{p^{\mathit{}}}}{\mathrm{maximize}}\quad f_{K_{q}}\left(\mathrm{log}\left(r_{1}\left(\mathbf{p}\right)\right),\ldots,\mathrm{log}\left(r_{K}\left(\mathbf{p}^ {}\right)\right)\right)\hspace{2.85em}\label{SPR_PF}\\
		\mathrm{subject\,to}\quad0\leq p_{k}^{}\leq P_{\mathrm{max}},\,k=1,\ldots,K\label{}\hspace{5.00em}
	\end{align}
\end{subequations}
\normalsize

In other words, we would choose to maximize the \textit{sum of the logarithm} of the smallest $K_q$ rates. This formulation ensures fairness since, if any user achieves zero rate, the objective function value in (\ref{SPR_PF}) would become $-\infty$. On the other hand, any solution assigning non-zero power to each user would obviously achieve higher utility.

To illustrate this, we consider a numerical example for a network with $K=70$ users and $q=10$. Utilizing the QFT algorithm, we obtain the convergence plot for the objective function in (\ref{PF_SLqP}).
\begin{figure}[t!]
	\begin{center} 
		\includegraphics[trim={0cm 0cm 0cm 0cm},clip, width=0.4\textwidth]{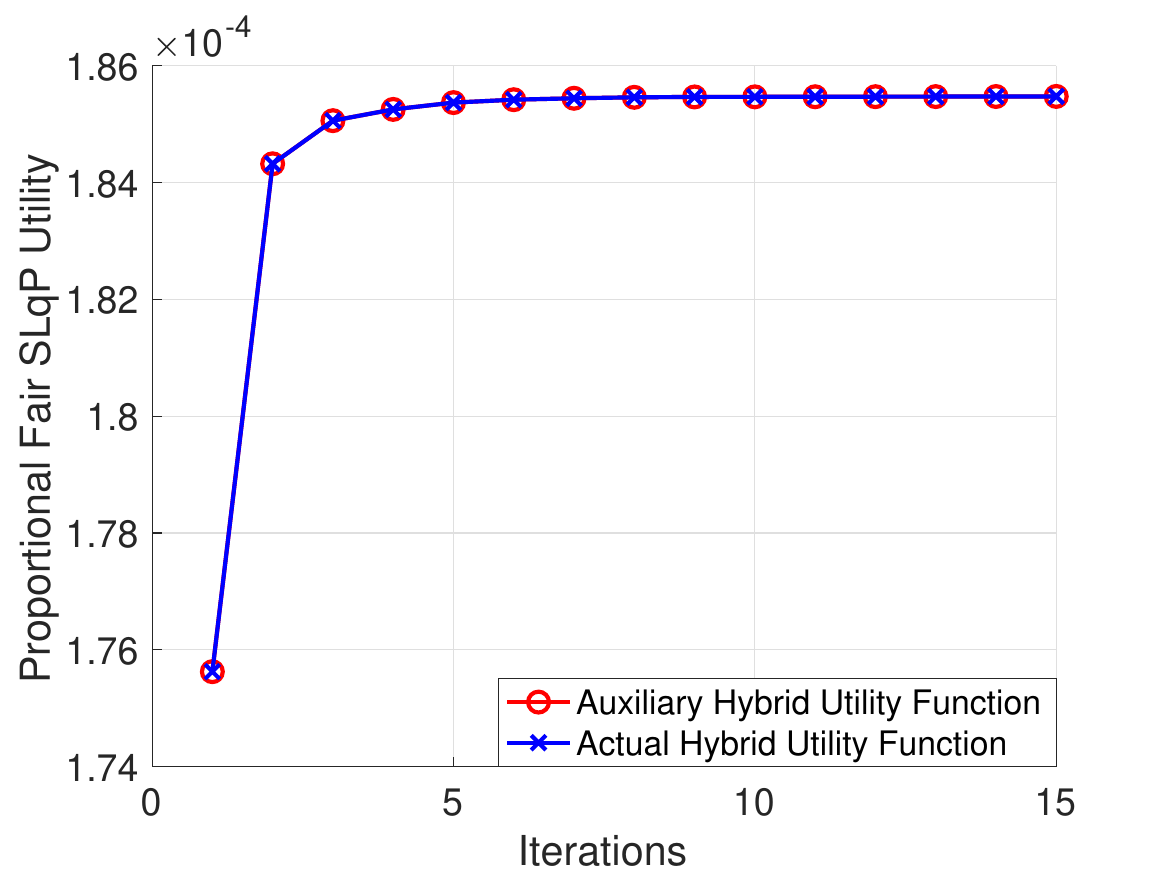}
		\caption{Convergence of proportional fair SLqP utility for $K=70$ and $K_q=7$.}
		\centering
		\label{PF_SLqP}
	\end{center}
\end{figure}

\subsection{Other Utility Functions}
Finally, we remark that the hybrid utility framework encompasses many more utility functions that have been studied in prior works. For example, the objective functions for the max-min multicasting problem \cite{karipidis_quality_2008} and weighted max-min-rate beamforming \cite{scutari_parallel_2017_1} problem can both be viewed as compositions of the minimum SLqP utility with non-negative weights and optimized using either the QFT or SGqP-WMSE framework.

\section{Long-Term Average Utility Maximization}\label{long_term}

So far we have focused on single-timeslot or \textit{short-term} percentile rate programs. In practice, however, it is more common to focus on ergodic or \textit{time-averaged utility} maximization. This results in a fairer resource allocation since it better accounts for both small- and large-scale fading effects in wireless networks, as well as factors like user mobility \cite{hossain_adaptive_2011}.
\vspace{-1.50em}
\subsection{Problem Formulation}
To illustrate this, we once again consider a multilink cellular network, with $K$ interfering single-antenna users; we note, however, that the subsequent approach we develop is equally applicable to the MU-MIMO beamforming model. The channel from the transmitter of user $j$ to the receiver of user $k$ in the $n^\mathrm{th}$ time slot is denoted by $h_{j\rightarrow k}^{\left(n\right)}$. Furthermore, the transmit power and AWGN receiver noise power for user $k$ are denoted by $p_{k}^{\left(n\right)}$ and ${z}_{k}^{\left(n\right)}$ respectively. It follows that the rate achievable by user $k$ for the $n^\mathrm{th}$ time slot is given by:
\begin{equation}\label{rates_expression}
\small
r_{k}^{\left(n\right)}\left(\mathbf{p}^{\left(n\right)}\right)=\mathrm{log}\left(1+\frac{p_{k}^{\left(n\right)}\left|h_{k\rightarrow k}^{\left(n\right)}\right|^{2}}{\sum_{j\neq k}p_{j}^{\left(n\right)}\left|h_{j\rightarrow k}^{\left(n\right)}\right|^{2}+{\sigma}^{2}}\right)
\end{equation}
\normalsize 

The long-term average or \textit{ergodic} utility of user $k$ is typically defined as a moving exponential weighted average of the rates achieved over all previous time slots as follows:
\begin{equation}\label{exponentially_weighted_average}
\bar{r}_{k}^{\left(n\right)}=\left(1-\alpha\right)\bar{r}_{k}^{\left(n-1\right)}+\alpha r_{k}^{\left(n\right)}
\end{equation}
where $\alpha\in(0,1)$ is a forgetting factor; we drop the notation indicating the power variables for brevity. We note, however, that our subsequent derivations are also compatible with other definitions of ergodic throughput, such as windowed averaging \cite{hossain_adaptive_2011}. Since direct maximization of the long-term utility is challenging, it is typically decomposed into a sequence of tractable sub-problems over each time slot. A well-known example of this is long-term average proportional-fair utility maximization in which a first-order WSR approximation is successively maximized in each time slot. Numerous works have explored algorithms for the general weighted sum-rate maximization problem; we refer the interested reader to  \cite{shen_fractional_2018-1} ,\cite{khan_optimizing_2020} and \cite{shi_iteratively_2011} as representative examples.

Unfortunately, such an approach is not possible with the SLqP utility: with the exception of the special case of sum-utility (i.e., $K_q=K$), its non-smoothness makes it impossible to derive a Taylor approximation. The challenge of optimization is further compounded by the fact that SLqP is a \textit{joint} utility function over all users; this coupling of utilities means that conventional long-term average utility maximization techniques \cite{palomar_tutorial_2006}, which decompose the network utility into a sum of per-user utility, cannot be readily applied either.

\subsection{Proposed Approaches}
To circumvent these obstacles, we propose to \textit{directly optimize} the long-term SLqP average utility instead of decomposing it. Assuming sequential optimization, it follows that the sub-problem for the $n^\mathrm{th}$ time slot can be expressed as

\small
\begin{subequations}\label{SPR_problem_longterm}
	\begin{align}
	\underset{\mathbf{p^{\mathit{\left(n\right)}}}}{\mathrm{maximize}}\quad f_{K_{q}}\left(\bar{r}_{1}^{\left(n\right)}\left(\mathbf{p}^{\left(n\right)}\right),\ldots,\bar{r}_{1}^{\left(n\right)}\left(\mathbf{p}^{\left(n\right)}\right)\right)\hspace{1.9em}\label{SPR_obj_longterm}\\
	\mathrm{subject\,to}\quad0\leq p_{k}^{\left(n\right)}\leq P_{\mathrm{max}},\,k=1,\ldots,K\label{SPR_constraint_longterm}\hspace{3.30em}
	\end{align}
\end{subequations}
\normalsize

where, from (\ref{exponentially_weighted_average}) and (\ref{rates_expression}), we have 
\small
\begin{multline}
\bar{r}_{k}^{\left(n\right)}\left(\mathbf{p}^{\left(n\right)}\right)=\left(1-\alpha\right)\bar{r}_{k}^{\left(n-1\right)}
\\+\alpha\mathrm{log}\left(1+\frac{p_{k}^{(n)}\left|h_{k\rightarrow k}^{(n)}\right|^{2}}{\sum_{j\neq k}p_{j}^{(n)}\left|h_{j\rightarrow k}^{(n)}\right|^{2}+\sigma^{2}}\right)
\end{multline}
\normalsize

The objective in (\ref{SPR_obj_longterm}) is once again non-convex and non-smooth. To effectively optimize it, we employ the QFT and LFT approaches to convert it into block-concave form.

We first consider the application of the QFT (given by Lemma 1 in Part I). Applying Theorem \ref{quadratic_transform} to Problem (\ref{SPR_problem_longterm}), we obtain the following equivalent auxiliary optimization problem:
\small
\begin{subequations}\label{SPR_problem_quad_longterm}
	\begin{align}
	\underset{\mathbf{x}^{\left(n\right)},\mathbf{p^{\mathit{\left(n\right)}}}}{\mathrm{maximize}}\quad f_{K_{q}}\left(\acute{r}_{1}^{\left(n\right)}\left(x_{1}^{\left(n\right)},\mathbf{p}^{\left(n\right)}\right),\ldots,\acute{r}_{K}^{\left(n\right)}\left(x_{K}^{\left(n\right)},\mathbf{p}^{\left(n\right)}\right)\right)\hspace{0.em}\label{SPR_obj_quad_longterm}\\
	\mathrm{subject\,to}\quad0\leq p_{k}^{\left(n\right)}\leq P_{\mathrm{max}},\,k=1,\ldots,K\label{SPR_constraint_longterm_quad}\hspace{5.80em}
	\end{align}
\end{subequations}
\normalsize
where
\small
\begin{multline}
\acute{r}_{k}^{\left(n\right)}\left(\mathbf{p}^{\left(n\right)}\right)=\left(1-\alpha\right)\bar{r}_{k}^{\left(n-1\right)}\\+\alpha\mathrm{log}\left(1+2x_{k}^{\left(n\right)}\sqrt{A_{k}\left(\mathbf{p}^{\left(n\right)}\right)}-\left(x_{k}^{\left(n\right)}\right)^{2}B_{k}\left(\mathbf{p}^{\left(n\right)}\right)\right)
\end{multline}
\normalsize
and $A_{k}\left(\mathbf{p}^{\left(n\right)}\right)$ and $B_{k}\left(\mathbf{p}^{\left(n\right)}\right)$ are once again the signal power and interference-plus-noise power in the SINR respectively.

The application of the LFT is similar. Utilizing Lemma 1 from Part I for the objective in (\ref{SPR_obj_longterm}), we obtain the following equivalent auxiliary optimization problem:

\small
\begin{subequations}\label{SPR_problem_log_longterm}
	\begin{align}
	\underset{\mathbf{x}^{\left(n\right)},\mathbf{p^{\mathit{\left(n\right)}}}}{\mathrm{maximize}}\quad f_{K_{q}}\left(\grave{r}_{1}^{\left(n\right)}\left(x_{1}^{\left(n\right)},\mathbf{p}^{\left(n\right)}\right),\ldots,\grave{r}_{K}^{\left(n\right)}\left(x_{K}^{\left(n\right)},\mathbf{p}^{\left(n\right)}\right)\right)\hspace{0.em}\label{SPR_obj_log_longterm}\\
	\mathrm{subject\,to}\quad0\leq p_{k}^{\left(n\right)}\leq P_{\mathrm{max}},\,k=1,\ldots,K\label{SPR_constraint_longterm_log}\hspace{5.80em}
	\end{align}
\end{subequations}
\normalsize

where
\small
\begin{multline}
\grave{r}_{k}^{\left(n\right)}\left(\mathbf{p}^{\left(n\right)}\right)=\left(1-\alpha\right)\bar{r}_{k}^{\left(n-1\right)}-\alpha x_{k}^{\left(n\right)}B_{k}\left(\mathbf{p}^{\left(n\right)}\right)+\\
\mathrm{\alpha log}\left(x_{k}^{\left(n\right)}\left(A_{k}\left(\mathbf{p}^{\left(n\right)}\right)+A_{k}\left(\mathbf{p}^{\left(n\right)}\right)\right)\right)+\alpha
\end{multline}
\normalsize

Following similar reasoning to Remark 8 from Part I, it can be easily verified that the auxiliary objectives in (\ref{SPR_obj_log_longterm}) and (\ref{SPR_problem_quad_longterm}) are block-concave in the power variables when the auxiliary variables are held fixed. The updates for the auxiliary variables remain unchanged compared to the short-term optimization, and are given respectively for the QFT and LFT by (\ref{optimal_quadratic_variables_power}) and (\ref{optimal_logarithmic_variables_power}) below.

\small
\begin{subequations}
	\begin{align}
	\mathrm{\mathbf{QFT:}}\hspace{2.00em}x_{k}^{\left(n\right)}=\frac{\sqrt{A_{k}\left(\mathbf{p}^{\left(n\right)}\right)}}{B_{k}\left(\mathbf{p}^{\left(n\right)}\right)},\quad{k}{=}{1,\ldots,K}\label{optimal_quadratic_variables_power}\\
	\mathrm{\mathbf{LFT:}}\hspace{2.430em}x_{k}^{\left(n\right)}=\frac{1}{B_{k}\left(\mathbf{p}^{\left(n\right)}\right)},\quad{k}=1,\ldots,K\label{optimal_logarithmic_variables_power}
	\end{align}
\end{subequations}
\normalsize
The update for the power variables then involves solving a convex optimization problem with fixed auxiliary variables, which is straightforward. 

Combining all steps together, we obtain Algorithm \ref{QFTAlg_longterm} for iterative optimization of the long-term average rate using the QFT and LFT algorithms. As with their short-term counterparts, the proposed algorithms are guaranteed to converge to stationary points of the per time slot problem in (\ref{SPR_problem_longterm}).

\begin{algorithm}[!t]
	\caption{QFT/LFT Algorithm for Long-Term SLqP Rate Maximization}
	\label{QFTAlg_longterm}
	\begin{algorithmic}[1]
		\For{$n=1,\ldots$}
		\State \textbf{initialize} $\mathbf{p}^{\left(n\right)}$
		\For{$i=1,\ldots$}
		\State{QFT:} \textbf{update} $\mathbf{x}^{\left(n\right)}$ using (\ref{optimal_quadratic_variables_power}).
		\State \hspace{2.50em}\textbf{update} $\mathbf{p}^{\left(n\right)}$ by solving (\ref{SPR_problem_quad_longterm}) for fixed $\mathbf{\hspace{10.50em}x}^{\left(n\right)}$.
		\hspace{2.00em}\State{LFT:} \textbf{update} $\mathbf{x}^{\left(n\right)}$ using (\ref{optimal_logarithmic_variables_power}).
		\State \hspace{2.50em}\textbf{update} $\mathbf{p}^{\left(n\right)}$ by solving (\ref{SPR_problem_log_longterm}) for fixed $\mathbf{\hspace{10.50em}x}^{\left(n\right)}$.
		\EndFor
		\State \textbf{until} some convergence criterion is met.
		\State \textbf{update} $\bar{r}_{k}^{\left(n\right)}$ for $k=1,\ldots,K$ using (\ref{exponentially_weighted_average}).
		\EndFor
	\end{algorithmic}
\end{algorithm}
\vspace{-0.0em}

\subsection{Simulation Results}
Utilizing a simulation environment identical to that detailed in Section \ref{simulation_model}, we plot the cumulative distribution function of long-term average $20^\mathrm{th}$-percentile throughput for both the QFT and LFT algorithms in Figure \ref{long_term_average_results}. The forgetting factor for the averaging process was chosen as $\alpha=0.30$. As benchmarks, we compare against the WMMSE algorithm for proportionally fair WSR maximization. 

In this multiple time slot setting, the proportionally fair WSR maximization problem is \textit{not} convex; indeed, as prior works have noted, the problem is strongly NP-hard in general. Thus, the WMMSE algorithm is guaranteed, at best, to converge to a stationary point of the long-term average sum-log-utility problem. The results are illustrated in Figure \ref{long_term_average_results}. 

As we can observe, both the proposed approaches achieve significantly higher long-term average utility than the WMMSE algorithm. The averaged utility obtained is summarized in Table \ref{long_term_average_results_summary}; both the QFT and LFT algorithms achieve over $50\%$ higher long-term average utility as compared to the WMMSE algorithm. However, an interesting point to note is that the LFT algorithm slightly outperforms the QFT algorithm; in our experience, this generally holds true for long-term utility maximization at lower percentiles. In contrast, the QFT algorithm appears to perform better for the short-term SLqP rate maximization problem as illustrated by the prior results in Part I. It is worth emphasizing that this is an observation and not a theoretical result.

Finally, it can be seen that the execution time of the proposed schemes is longer than the WMMSE-PF benchmark. This is largely due to the fact that WMMSE is a closed-form algorithm, while, as mentioned in Part 1, we call CVX to solve a convex problem in each iteration of the QFT and LFT algorithms. While this is a drawback, it is worth re-emphasizing that the performance gap between our schemes and the WMMSE-PF benchmark is quite large.
\begin{figure}[t!]
	\begin{center} 
		\includegraphics[trim={0cm 0cm 0cm 0cm},clip,width=0.4\textwidth]{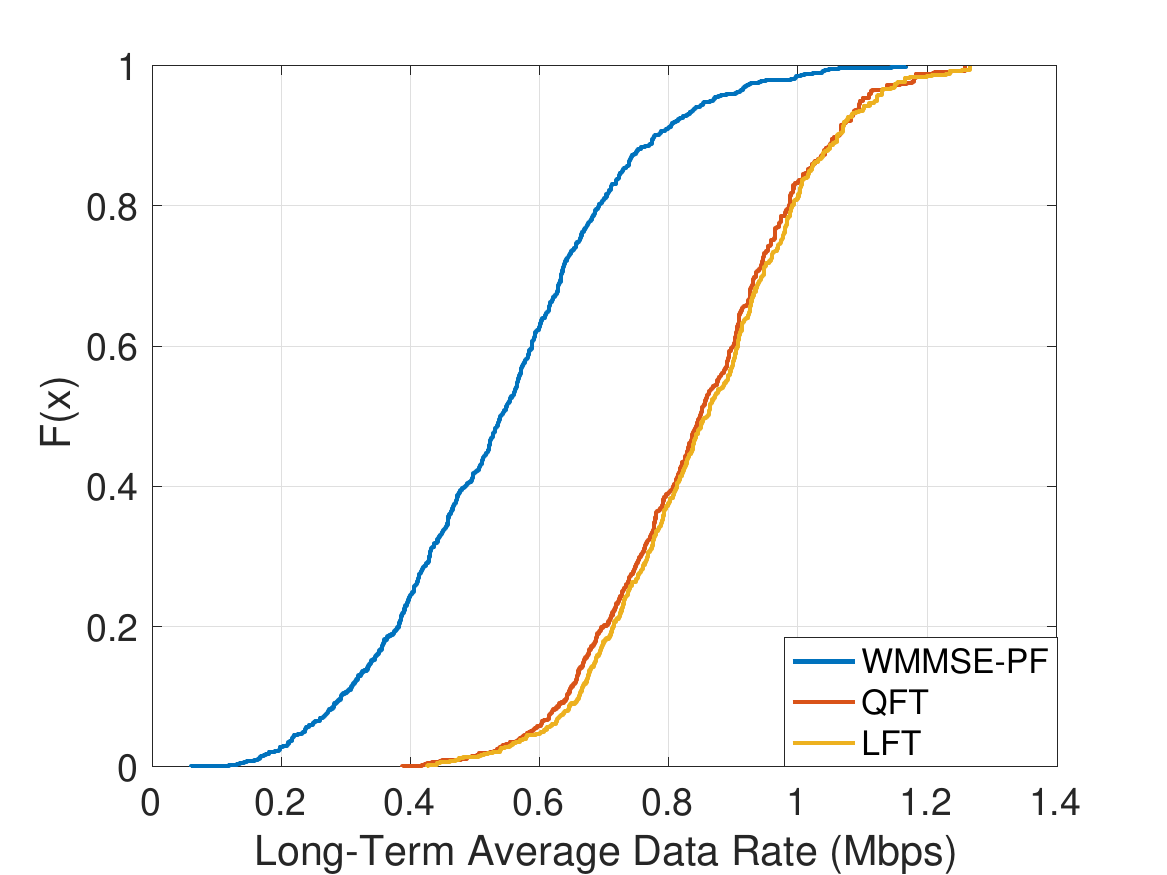}
		\caption{Empirical distribution of long-term average SLqP rates for $K=140$, $K_q=28$.}
		\centering
		\label{long_term_average_results}
	\end{center}
\end{figure} 
\vspace{-0.600em}
\begin{table}[h!]

	\begin{center} 
		\captionof{table}{Long-term average $20^\mathrm{th}$-percentile SLqP rate for different algorithms.} \label{tab1}
		\begin{tabular}{|c|c|c|}
			\hline
			Algorithm & Ergodic Utility (Mbps) & Run Time (s)\\ \hline
			QFT       & 11.8501  &         65        \\ \hline
			LFT       & 11.9876     &        71      \\ \hline
			WMMSE-PF  & 7.5604 &           5         \\ \hline
		\end{tabular}
		\centering
		\label{long_term_average_results_summary}
	\end{center} 
\end{table}
\section{Conclusion}\label{conklusion}
This paper explores the problem of optimizing beamforming weights in MU-MIMO networks for SLqP utility functions. As in Part I, we are interested in optimizing resources to help cell-edge users. However, as we have seen, formulation can be used for a wide variety of objective functions. We proposed a multidimensional complex-valued extension to the QFT algorithm as well as the SGqP-WMSE formulation which enable us to derive effective non-decreasing MM algorithms for this class of problems. Our proposed algorithms also outperform benchmarks from the prior beamforming literature, including the ZF-N and SQP methods. Since these algorithms belong to the minorization-maximization framework, they are guaranteed to converge to directional stationary points of the rate-percentile utility functions and enable us to short-circuit the tedious steps conventionally employed in dealing with non-smooth and non-convex problems.

We also demonstrated how the proposed algorithms can be directly applied to optimize \textit{any concave, non-decreasing utility function}. This enables us to extend optimization to the class of hybrid utility functions, which subsume numerous resource allocation problems considered in prior works. At the same time, this also enables us to combine percentile and conventional utility functions in order to guarantee fairness for all users.

Finally, we considered long-term utility maximization, and proposed a direct optimization strategy that avoids the differentiability requirement imposed by classic approximation techniques. The proposed QFT and LFT algorithms considerably outperform the benchmark proportional-fair WMMSE algorithm in terms of lower-percentile cell-edge throughput.

This paper represents a preliminary step towards understanding and solving this important class of problems, and there are numerous directions which can be pursued in the future. One pressing question is how to reduce the complexity associated with solving convex problems for each iteration of the proposed algorithms. Finally, incorporating further constraints on individual throughputs in conjunction with the proposed hybrid utilities, may allow for even greater flexibility in terms of controlling the allocation of resources to different users in wireless networks.



\bibliographystyle{IEEEtran}
\bibliography{IEEEabrv.bib,biblio.bib}

\end{document}